\newtheorem{theorem}{Theorem}
\newtheorem{lemma}[theorem]{Lemma}
\newtheorem{proposition}[theorem]{Proposition}
\newtheorem{claim}[theorem]{Claim}
\newtheorem{claim*}[theorem]{Claim}
\definecolor{defblueee}{rgb}{0.1,0.4,0.6} 
\renewcommand{\emph}[1]{{\textcolor{red!50!black}{\em #1}}}
\newcommand{\R}{\ensuremath{\mathbb{R}}\xspace}
\newcommand{\mkmcal}[1]{\ensuremath{\mathcal{#1}}\xspace}
\newcommand{\D}{\mkmcal{D}}
\newcommand{\RR}{\mkmcal{R}}
\newcommand{\radius}{r}
\newcommand\dist{\ensuremath{\mathrm{dist}}}
\newcommand{\eps}{\ensuremath{\varepsilon}\xspace}
\title{Near-Linear and Parameterized Approximations for Maximum Cliques in Disk Graphs}
\author{Jie Gao\thanks{Computer Science Department, Rutgers University, Email: \texttt{jg1555@rutgers.edu}}\and Pawel Gawrychowski \thanks{
Institute of Computer Science, University of Wroc\l{}aw, Email: \texttt{gawry@cs.uni.wroc.pl}}\and Panos Giannopoulos\thanks{Department of Computer Science, City St George's, University of London, Email: \texttt{panos.giannopoulos@city.ac.uk} }\and
Wolfgang Mulzer\thanks{Institut f\"ur Informatik, Freie Universit\"at Berlin, Email: \texttt{mulzer@inf.fu-berlin.de}} \and Satyam Singh \thanks{Department of Computer Science, Aalto University, Email: \texttt{satyam.singh@aalto.fi}} \and
Frank Staals \thanks{Department of Information and Computing Sciences, Utrecht University, Email: \texttt{f.staals@uu.nl}}
\and Meirav Zehavi \thanks{Institute for the Theory of Computing, Ben Gurion University of the Negev, Email: \texttt{zehavimeirav@gmail.com} }}
\begin{document}

\maketitle

\begin{abstract}
A \emph{disk graph} is the intersection graph of (closed) disks in the plane. 
We consider the classic problem of finding a maximum clique in a disk graph. 
For general disk graphs, the complexity of this problem is still open, but for unit disk graphs, 
it is well known to be in P. The currently fastest algorithm runs in 
time $O(n^{7/3+ o(1)})$, where $n$ denotes the number of 
disks~\cite{EspenantKM23, keil_et_al:LIPIcs.SoCG.2025.63}. 
Moreover, for the case of disk graphs with $t$ distinct radii, the problem has also recently 
been shown to be in XP. More specifically, it is solvable in time 
$O^*(n^{2t})$~\cite{keil_et_al:LIPIcs.SoCG.2025.63}.
In this paper, we present algorithms with improved running times by allowing for 
approximate solutions and by using randomization: 

\begin{enumerate}
\item[(i)] for unit disk graphs, we give an algorithm that, with constant success probability, computes a $(1-\eps)$-approximate maximum clique in expected time $\tilde{O}(n/\eps^2)$; and

\item[(ii)] for disk graphs with $t$ distinct radii, we give a parameterized approximation 
scheme that, with a constant success probability, computes  a $(1-\eps)$-approximate maximum clique in expected time
$\tilde{O}(f(t)\cdot (1/\eps)^{O(t)} \cdot n)$, for some (exponential) function $f(t)$.
\end{enumerate}

\end{abstract}

\section{Introduction}
\label{sec:Introduction}

A \emph{disk graph} is the intersection graph of closed disks in the plane, where the vertices are 
the disks, and two vertices are connected by an edge if and only if the two corresponding disks intersect.
A \emph{unit disk graph} is a disk graph in which all disks have the same radius, which can be assumed to
be $1$. Disk graphs and unit disk graphs are probably the most extensively studied classes of geometric
intersection 
graphs~\cite{Bonamy2021-ep,bonnet_et_al:LIPIcs.FSTTCS.2020.17,BreuK98,ClarkCJ90,Fishkin03,TkachenkoW25}. 
In particular, they are popular models of wireless communication networks, where each disk represents a 
wireless station, and the respective radii encode the transmission ranges. Theoretically,  
unit disk graphs and disk graphs are of special interest, because they generalize the familiar
class of planar graphs. Indeed, the well-known circle-packing theorem (also called the the 
Koebe–Andreev–Thurston theorem~\cite{Koebe1936}) shows that every planar graph is a disk graph. 
Unlike planar graphs, disk graphs (or unit disk graphs) can be dense, with potentially $\Theta(n^2)$ 
edges. Thus, algorithms on disk graphs often use an implicit representation 
(of the coordinates and radii of the disks) and derive edges on-demand.

Due to the special structure of disk graphs, it is an active research direction to study efficient 
approximation algorithms for classical NP-hard problems on disk graphs or unit disk graphs, such as maximum independent set~\cite{Chan2012-pk,Hunt1994-tu} and minimum dominating 
set~\cite{Gibson10algorithms,Nieberg2008-zi}. In this paper, we study the maximum clique problem on unit 
disk graphs and disk graphs. For a given graph $G=(V, E)$, the problem asks for a vertex set $S\subseteq V$ of maximum cardinality such that the induced subgraph on $S$ is a clique. 
The maximum  clique problem on general graphs is a classical NP-hard problem 
and is also hard to approximate~\cite{Feige2002-va}. However, maximum clique can be solved in polynomial time on unit disk graphs and the problem has been actively studied recently. We first review prior work on this problem and then present our new results. 

\subparagraph*{Related work.} 
Clark, Colbourn, and Johnson~\cite{ClarkCJ90} gave an elegant
$O(n^{4.5})$-time algorithm for finding a maximum clique in a unit
disk graph. The algorithm guesses in quadratic time the pair of most
distant (diameter) disk centers in a maximum clique and considers the
subgraph induced by all the disks whose centers are at most as distant
from both candidate diameter centers. The induced graph is
co-bipartite, hence computing a maximum clique is equivalent to
finding a maximum independent set in the complement bipartite
graph. This in turn reduces to finding a maximum matching. By reducing the
necessary number of pairs of centers to search for and/or reducing the running time for finding a maximum independent set, the running time has been successively improved to $O(n^{3.5} \log n)$ by Breu~\cite{Breu96}, $O(n^3\log n)$ by Eppstein~\cite{Eppstein2010-vb}, $O(n^{2.5}\log n)$ by Espenant, Keil and Mondal~\cite{EspenantKM23} and, most recently, to $O(n^{7/3+ o(1)})$ by Chan~\cite{Chan23} as noted in Keil and Mondal~\cite{keil_et_al:LIPIcs.SoCG.2025.63}. In particular, the latter result is based on the clever divide and conquer approach by Espenant, Keil and Mondal~\cite{EspenantKM23}, which considers only a linear number of center pairs, and on the observation that, for each such pair, one can use a small bi-clique cover of size\footnote{The standard $\tilde{O}(\cdot)$ notation hides polylog factors in $n$.} $\tilde{O}(n^{4/3})$ of the resulting bipartite graph to compute a maximum matching in time almost linear to the size of cover. Very recently, Tkachenko and Wang~\cite{TkachenkoW25} gave an algorithm that runs in time $O(n\log n + nK^{4/3+o(1)})$, where $K$ is the size of the maximum clique; this is an improvement to the above bound by Chan~\cite{Chan23} when $K=o(n)$.



For general disk graphs, the complexity of computing the maximum clique has been a long-standing open problem~\cite{AmbuhlW05,EspenantKM23,Fishkin03}. 
Recently, Keil and Mondal~\cite{keil_et_al:LIPIcs.SoCG.2025.63} considered the case where there are $t$ distinct disk radii and gave an algorithm that runs in polynomial time for every constant $t$. In particular, the running time is $O(2^t n^{2t} (f(n)+n^2))$, where $f(n)$ is the time to compute a maximum matching in an $n$-vertex bipartite graph. 
A maximum matching in a bipartite graph can be computed in time $\tilde{O}(m)$, where $m$ is the number of edges in the graph, using the near-linear time max-flow algorithm in the recent breakthrough in~\cite{Chen2025} with standard reductions to bipartite matching, e.g., the $O(m\log m)$-time fractional flow rounding technique in~\cite{Kang2015}; alternatively, it can be computed in time $\tilde{O}(n^2)$ by a combinatorial algorithm~\cite{Chuzhoy2024-lj}.
Further, the problem admits a randomized EPTAS\footnote{EPTAS stands for efficient polynomial-time approximation scheme.} 
with a running time of $2^{\tilde{O}(1/\eps^3)}n^{O(1)}$ 
and an exact sub-exponential time algorithm~\cite{Bonamy2021-ep}.

Finally, the problem has also been studied for intersection graphs of
other classes of objects in the plane and has been shown to be NP-hard for rays~\cite{Cabello2013-yj}, strings~\cite{Keil2022-op}, 
ellipses and triangles~\cite{AmbuhlW05}, and for sets of both rectangles and unit disks~\cite{bonnet_et_al:LIPIcs.FSTTCS.2020.17}.

\subparagraph*{Our contribution.}
The above recent exciting results and the absence of any non-trivial lower bounds for the problem leave some natural open questions. 

For improving the currently best $O(n^{7/3+ o(1)})$-time bound for unit disk graphs, any algorithm that follows the classical line of approach by Clark, Colbourn, and Johnson~\cite{ClarkCJ90} would effectively need to reduce the number of candidate disk centers to $o(n)$ or compute a maximum matching in the resulting complement bipartite graph in $o(n^{4/3})$ time.
Both of these options seem challenging. On the other hand, the algorithm by Tkachenko and Wang~\cite{TkachenkoW25} does not bring an improvement for instances with $\Omega(n)$-size cliques, and dense instances are in fact the hardest for the problem, see for example~\cite{Bonamy2021-ep}. 

In this paper, we show that further improvement is possible, albeit by allowing for $(1-\eps)$-approximate solutions and by using randomization. 
Our first result is as follows:

\begin{enumerate}
\item[(i)] For unit disk graphs, there is an algorithm that computes with constant success probability a $(1-\eps)$-approximate maximum clique in expected time $\tilde{O}(n/\eps^2)$; this result is presented in Section~\ref{sec:Unit_Disk_Graphs}.
\end{enumerate}

In particular, we first random sample two disk centers
inside a constant-size neighborhood of a properly defined grid-cell. We show that these centers
can help us identify a small region that contains (the centers of) a
$(1-\eps)$-approximate maximum clique of the neighborhood, with $\Omega(\eps)$
probability. Moreover, the disks with centers in that small region induce a co-bipartite graph. To save time, our second step, which is given in Section~\ref{sec:Computing_an_Approximate_Clique}, is to actually compute a $(1-\eps)$-approximate maximum clique in this co-bipartite graph in time that is almost linear in the number of disks of the graph. 

Our second result is for the general disk graph setting. As mentioned above, the EPTAS in~\cite{Bonamy2021-ep} runs in $2^{\tilde{O}(1/\eps^3)}\cdot O(n^4)$ time while, for disk graphs with $t$ distinct radii, the algorithm by Keil and Mondal~\cite{keil_et_al:LIPIcs.SoCG.2025.63} runs in $O^*(n^{2t})$ time\footnote{The standard notation $O^*(\cdot)$ hides polynomial factors.}.
This raises the question of whether such exponential dependencies w.r.t. $t$ (on $n$) and $\eps$ can be improved or even removed altogether. We show that this is possible by considering parameterized $(1-\eps)$-approximations w.r.t. $t$ and $\eps$ and, as in the case of unit disk graphs, by using randomization.

\begin{enumerate}
\item[(ii)] For disk graphs with $t$ distinct radii, there is an efficient parameterized approximation scheme (EPAS)\footnote{An efficient parameterized approximation scheme (EPAS), for some parameter $k$, is a $(1-\eps)$-approximation algorithm that runs in $f(k, \eps) n^{O(1)}$ expected time~\cite{FeldmannKEM20}.
} 
that computes with a constant success probability a $(1-\eps)$-approximate maximum clique in $\tilde{O}(f(t)\cdot (1/\eps)^{O(t)} \cdot n)$ time, for some (exponential) function $f(t)$; this result is presented in Section~\ref{sec:Different_Radii}.
\end{enumerate}

Our algorithm is based on the main idea behind the exact algorithm by Keil and
Mondal. Suppose that for each of the radii one can
guess the leftmost and rightmost disks in an optimal
solution. Then, one can find two sets of disks $L$ and $R$ such that $L\cup R$ induces a co-bipartite graph and contains a maximum clique. The algorithm by Keil and
Mondal~\cite{keil_et_al:LIPIcs.SoCG.2025.63} enumerates all possible
choices to find the leftmost and rightmost disks for each radius. To
speed up the running time for $(1-\eps)$-approximation, we bring in three more ideas. First, we can assume w.l.o.g that for disks of radius $r_i$ that appear in the optimal clique, at least $\Theta(\eps k^*/t)$ of them appear in the max clique where $k^*$ is the optimal clique size, which can be approximated by a factor of $1/5$ in $O(n\log n)$ time  (see~\cref{sub:The_complete_algorithm}) -- otherwise, we can skip disks of radius $r_i$ completely without losing more than $\eps$ fraction. Second, for each radius $r_i$ we can randomly sample two disks $a_i$ and $b_i$ -- if we repeat this enough times, one such pair is close to, in terms of rank, the leftmost and rightmost disks of radius $r_i$ in the optimal solution.  
Last, as in the case of unit disks, we compute a $(1-\eps)$-approximate clique in the resulting co-bipartite graph using the algorithm from Section~\ref{sec:Computing_an_Approximate_Clique}. 



We note here that our result for unit disk graphs is not subsumed by the one for $t$ distinct radii since the latter gives a worse dependency on $\eps$ for $t=1$. 



\subsection{Definitions and notation} 
Let \D be a set of $n$ disks in the plane, and let $G(\D)$ be the \emph{intersection graph} or 
\emph{disk graph} of $\D$ where 

$$G(\D) = (\D, \{  \{D,D'\} \mid D, D' \in \D, D \neq D', \,\, \text{and} \,\, D\cap D' \neq \emptyset \}).$$


A \emph{unit disk graph} is a disk graph where all disks have unit radius, which we will assume to be $1$.
We will focus on unit disk graphs and on disk graphs where the disks have $t$ different
radii, for some given $t>1$. 

Let $D(c,r)$ be a disk with center $c$ and radius $r$. If the radius is clear from the context, we may simply write $D(c)$.
We refer to the intersection of two disks $D(c,r) \cap D(c',r')$ as
their \emph{lens}. 
The line through $c$ and $c'$ (oriented towards
$c'$) splits this lens into two \emph{half-lenses}; a region left of
this line and a region right of this line. See \Cref{fig:lens}.
\begin{figure}[htbp]
    \centering
    \includegraphics{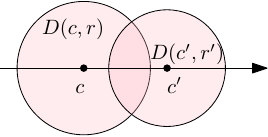}
    \caption{The lens $D(c,r) \cap D(c',r')$ of two intersecting disks $D(c,r), D(c',r')$.}
    \label{fig:lens}
\end{figure}

Let $a$ and $b$ be two points in the plane. The Euclidean distance between $a$ and $b$ is denoted by $\dist(a,b)$. The \emph{vertical slab} of $a$ and $b$ is the region between the vertical lines through $a$ and $b$, that is, the set of all points whose $x$-coordinates lie between the $x$-coordinates of $a$ and $b$.

A graph is \emph{co-bipartite} if its complement is bipartite. Let $X$ and $Y$ be sets of disks such that $G(X)$ and $G(Y)$ are both cliques. As observed in~\cite{ClarkCJ90}, $G(X \cup Y)$ is co-bipartite.
Let $A$ be a set. We denote its cardinality by $|A|$.
Let $X^* \subseteq \D$ be a maximum-size clique in $G(\D)$, i.e., a subset of disks that all pairwise intersect. A clique $X \subseteq \D$ of size at least $(1-\eps)|X^*|$ is a 
$(1-\eps)$-approximate (maximum) clique.

\section{Approximating the Maximum Clique in a co-Bipartite Disk Graph}
\label{sec:Computing_an_Approximate_Clique}

In this section, we give a near-linear-time algorithm for computing a $(1-\eps)$-approximate clique in a co-bipartite disk graph
$G :=G(X \cup Y)$. In such a graph, every $x,x'\in X$ and every $y, y' \in Y$ are adjacent, 
while $x\in X$ and $y\in Y$ are adjacent if and only if the distance between the corresponding centres is at most the
sum of the corresponding radii. We begin with the straightforward transformation: a clique in $G$ corresponds to
an independent set in the complement of $\overline{G}$, which is bipartite. Thus, it is enough to compute
a $(1-\eps)$-approximate independent set in a bipartite graph $\overline{G}$. In this graph, we have an
edge $(x,y)$, for $x\in X$ and $y\in Y$, if and only if $\dist(x, y) > \radius(x)+\radius(y)$, where $\radius(\cdot)$
denotes the radius and $\dist(\cdot,\cdot)$ is the distance between the centres.

We first focus on computing the \emph{cardinality} of a $(1-\eps)$-approximate independent set in $\overline{G}$. 
Recovering the corresponding independent set is a bit more involved and will be explained later.
For any graph, the complement of an independent set is a vertex cover. Denoting by $n$ the number of nodes
in $\overline{G}$, this immediately gives us that the cardinality of the maximum independent set $I^{*}$
is equal to $n$ minus the cardinality of the minimum vertex cover $C^{*}$. We first observe that
a similar property holds for their approximations.

\begin{proposition}
\label{lem:approx}
Let $c$ be an $(1+\eps)$-approximation of $|C^{*}|$, that is, $|C^{*}| \leq c \leq (1+\eps)|C^{*}|$.
Then, $ n-c$ is an $(1-\eps)$-approximation of $|I^{*}|$, that is, $(1-\eps)|I^{*}| \leq n-c \leq |I^{*}|$.
\end{proposition}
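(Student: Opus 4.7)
The plan is to reduce the statement to the elementary identity $|I^{*}|+|C^{*}|=n$, which holds in any graph since the complement of an independent set is a vertex cover and vice versa. Starting from this, the upper bound $n-c\leq |I^{*}|$ is immediate: the hypothesis $c\geq |C^{*}|$ gives $n-c\leq n-|C^{*}|=|I^{*}|$.

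For the lower bound, substitute $n=|I^{*}|+|C^{*}|$ and use $c\leq (1+\eps)|C^{*}|$ to obtain
\[
n-c \;\geq\; |I^{*}|+|C^{*}|-(1+\eps)|C^{*}| \;=\; |I^{*}|-\eps\,|C^{*}|.
\]
So it remains to show $|I^{*}|-\eps|C^{*}|\geq (1-\eps)|I^{*}|$, which is equivalent to $|C^{*}|\leq |I^{*}|$. This is where the bipartiteness of $\overline{G}$ is crucial: in any bipartite graph on parts $X$ and $Y$, each part is itself an independent set, hence $|I^{*}|\geq \max(|X|,|Y|)\geq n/2$, and combined with $|I^{*}|+|C^{*}|=n$ this yields $|C^{*}|\leq n/2\leq |I^{*}|$.

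There is no real obstacle here; the only subtlety is remembering that $|C^{*}|\leq |I^{*}|$ is not true in general graphs but does hold for bipartite ones, and this is precisely what makes the $\eps$ loss on the vertex-cover side translate into an $\eps$ loss on the independent-set side without any worse blowup. Assembling the two bounds gives $(1-\eps)|I^{*}|\leq n-c\leq |I^{*}|$, as claimed.
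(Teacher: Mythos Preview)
Your proof is correct and essentially identical to the paper's: both use $|I^{*}|+|C^{*}|=n$ for the upper bound and invoke bipartiteness of $\overline{G}$ to get $|I^{*}|\geq n/2$ for the lower bound. The only difference is cosmetic algebra---the paper writes the lower bound as $|I^{*}|-\eps n+\eps|I^{*}|$ and uses $|I^{*}|\geq n/2$ directly, while you rewrite it as $|I^{*}|-\eps|C^{*}|$ and use the equivalent $|C^{*}|\leq|I^{*}|$.
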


\begin{proof}
Because $|C^{*}|=n-|I^{*}|$, we have  $n-c \leq |I^{*}|$.
We observe that $|I^{*}| \geq n/2$, as one of the sides of $\overline{G}$ consists of at least $n/2$ nodes
forming an independent set.
Then, $n-c \geq n-(1+\eps)|C^{*}| = n-(1+\eps)(n-|I^{*}|) = |I^{*}|-\eps \cdot  n+\eps |I^{*}| \geq (1-\eps)|I^{*}|$.
\end{proof}

We now focus on obtaining a $(1+\eps)$-approximation of $|C^{*}|$. For a bipartite graph, Kőnig's theorem states
that the cardinality of the minimum vertex cover is equal to the cardinality of the maximum matching. Thus, denoting
by $M^*$ the maximum matching, it is enough to compute a $(1-\eps/2)$-approximation of $|M^{*}|$, denoted $m$.
Indeed, we can assume $\eps \leq 1$, and then $(1-\eps/2)|M^{*}| \leq m \leq |M^{*}|$ implies
$|M^{*}| \leq m/(1-\eps/2) \leq |M^{*}|/(1-\eps/2) \leq (1+\eps)|M^{*}|$, so $m/(1-\eps/2)$ can be
returned as a $(1+\eps)$-approximation of $|C^{*}|$.

To design the algorithm, we need an efficient data structure for maintaining a dynamic set of points $S\subseteq \mathbb{R}^{2}$.
Each $s\in S$ has its associated weight $w_{s}\in\mathbb{R}$, and we define a distance function $\delta:\mathbb{R}^{2}\times S\rightarrow \mathbb{R}$
as $\delta(p,s)=w_{s}+\dist(p,s)$. The query returns, for a given point $x\in \mathbb{R}^{2}$, the furthest point in $S$,
that is, $\arg\max_{s\in S} \delta(x,s)$. This can be seen as a point location query in the furthest site additively weighted
Voronoi diagram of $S$.

\begin{lemma}
\label{lem:voronoi}
We can maintain a dynamic set of of weighted points $S$ of size $n$ such that an insertion
takes amortized expected time $O(\log^2 n)$, a deletion takes amortized expected time 
$O(\log^4 n)$, and so that a point location query in the furthest site additively weighted Voronoi diagram of $S$ can be answered in worst-case deterministic time $O(\log^2 n)$.
\end{lemma}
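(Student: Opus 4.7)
The plan is to cast the query as a point location in the upper envelope of a family of $n$ cones in $\R^3$ and then feed it to a dynamic envelope data structure. For each weighted point $s \in S$ we associate the cone $z = w_s + \|x-s\|$; the query $\arg\max_{s\in S}\delta(x,s)$ asks for the cone that attains the largest $z$-value above the point $x$, and the xy-projection of this upper envelope is precisely the additively weighted furthest-site Voronoi diagram of $S$. Any two such cones intersect in $O(1)$ hyperbolic arcs of constant description complexity, and the envelope has combinatorial complexity $O(n)$, so on a static set of $n$ sites a randomized incremental construction produces the diagram together with a point-location structure of query time $O(\log n)$ in $O(n\log n)$ expected time.

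To support dynamic updates I would follow the standard template used for dynamic planar Voronoi diagrams and dynamic 3D convex hulls, layering two well-known techniques. Insertions are handled by the Bentley--Saxe logarithmic method: maintain $O(\log n)$ static substructures whose sizes form a geometric progression; a new point triggers a cascade of rebuilds whose cost, charged over the lifetime of the inserted point, amounts to $O(\log^2 n)$ amortized expected time per insertion. A query is answered by consulting each of the $O(\log n)$ substructures in $O(\log n)$ deterministic worst-case time and combining the candidate answers, giving the claimed $O(\log^2 n)$ deterministic worst-case query time. Deletions are layered on top using a segment tree over the update sequence (equivalently, Chan-style \emph{hereditary} envelopes): each site is registered at $O(\log n)$ nodes covering its lifespan and a deletion forces the affected nodes to be rebuilt; an amortized analysis analogous to the one used for dynamic pseudo-plane upper envelopes contributes two additional logarithmic factors, yielding $O(\log^4 n)$ amortized expected time per deletion.

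The main obstacle is verifying that the shifted cones associated with the weighted points fit the structural conditions demanded by the dynamic envelope machinery---concretely, that every pair of cones meets in a single well-behaved curve (so the pairwise lower envelopes have constant complexity) and that the static point-location structure is hereditary in the sense that its restriction to the subset of sites stored at any segment-tree node can be computed without overhead blow-up. Both properties are standard consequences of the geometry of additively weighted Voronoi diagrams, but they should be checked explicitly before invoking the generic framework; once verified, the rest is a black-box application, and the final bounds match exactly the statement of the lemma.
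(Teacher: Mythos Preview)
Your reduction to vertical ray shooting in the upper envelope of shifted cones is correct and is exactly what the paper does (attributing it to Agarwal, Efrat, and Sharir~\cite{AgarwalES00}). The paper, however, does not rebuild the dynamic machinery: it simply invokes, as a black box, the data structure of Kaplan, Mulzer, Roditty, Seiferth, and Sharir~\cite{DBLP:journals/dcg/KaplanMRSS20} (with Liu's improvement~\cite{Liu22}) for dynamic envelopes of constant-description-complexity bivariate functions, which directly yields the stated insertion, deletion, and query bounds.

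The gap in your sketch is the deletion mechanism. A ``segment tree over the update sequence'' in which each site is registered at the $O(\log n)$ nodes covering its lifespan is an \emph{offline} technique: it presupposes that the deletion time of every point is known when it is inserted. In the intended application (the BFS/DFS inside Hopcroft--Karp in \Cref{lem:matching}), deletions are triggered by query outcomes and cannot be foreseen, so a genuinely online structure is required. Obtaining $O(\polylog n)$ amortized \emph{online} deletion for upper envelopes of general constant-complexity surfaces is precisely the non-trivial contribution of~\cite{DBLP:journals/dcg/KaplanMRSS20,Liu22}; it is not a routine layering of Bentley--Saxe with a time segment tree, and calling it ``equivalently, Chan-style hereditary envelopes'' conflates distinct techniques and glosses over the real difficulty. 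Your Bentley--Saxe treatment of insertions and queries is fine, but for deletions you should cite the existing online result rather than the offline construction you describe.
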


\begin{proof}
As explained by Agarwal, Efrat, and Sharir~\cite[Section~9.1]{AgarwalES00}, the
problem reduces to maintaining the upper envelope of a set of
$n$ totally defined continuous bivariate functions
of constant description complexity in three dimensions, under insertions,
deletions, and vertical ray shooting queries. The discussion of
Agarwal, Efrat, and Sharir~\cite[Section~9.1]{AgarwalES00} also shows
that the complexity of these upper envelopes (and thus of the corresponding
farthest-site Voronoi diagrams) is linear.
Thus, the result follows from the data structure given in \cite[Theorem~8.3]{DBLP:journals/dcg/KaplanMRSS20} and its subsequent improvement \cite{Liu22}. 
\end{proof}


With the above data structure in hand, we can describe the algorithm for computing a $(1-\eps)$-approximate matching in $\overline{G}$.
As explained above, returning $n$ minus its cardinality gives us the cardinality of a $(1-\eps)$-approximate
clique in $G$.

\begin{lemma}
\label{lem:matching}
Let $G$ be a co-bipartite disk graph on $n$ vertices. A $(1-\epsilon)$-approximate matching of
$\overline{G}$ can be computed in expected $O((n/\eps)\log^4 n)$ time. 
\end{lemma}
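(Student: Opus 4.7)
The plan is to combine the classical Hopcroft--Karp framework with the dynamic furthest-point Voronoi diagram from Lemma~\ref{lem:voronoi}. Hopcroft--Karp proceeds in phases, where each phase finds a maximal set of vertex-disjoint shortest augmenting paths in $\overline{G}$ and augments the current matching along them. A standard symmetric difference argument shows that after phase $k$ every remaining augmenting path contains $\Omega(k)$ matching edges, so $|M^{*}|-|M| = O(|M|/k)$; choosing $k=\Theta(1/\eps)$ therefore yields a $(1-\eps)$-approximate matching. It thus suffices to implement a single phase in expected $O(n\log^{4}n)$ time.

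To do so, I first translate adjacency in $\overline{G}$ into a query the data structure of Lemma~\ref{lem:voronoi} supports. An edge $(x,y)$ lies in $\overline{G}$ iff $\dist(x,y) > \radius(x)+\radius(y)$, equivalently $\dist(x,y)+(-\radius(y)) > \radius(x)$. Assigning each $y\in Y$ the additive weight $w_{y}=-\radius(y)$, a vertex $x \in X$ has a neighbor in the current dynamic set $S \subseteq Y$ iff a furthest additively-weighted point query at $x$ returns some $y^{*}$ with $\delta(x,y^{*}) > \radius(x)$; when it does, $y^{*}$ itself is such a neighbor. A symmetric reduction handles queries from $Y$ into $X$.

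Each phase is implemented as a layered BFS from the currently free vertices of $X$. The Voronoi structure initially holds all of $Y$. When processing an $x$ on an even BFS layer, we repeatedly query for a neighbor, append the returned $y^{*}$ to the next layer, and delete it from the structure, stopping as soon as the furthest point no longer satisfies $\delta(x,\cdot)>\radius(x)$. From each $y$ on an odd layer we simply follow its (unique) matched edge, if any, back to $X$. Once the first layer containing free $Y$-vertices is reached, we extract a maximal vertex-disjoint set of shortest augmenting paths by a standard DFS on the adjacencies recorded during BFS (no further Voronoi queries are needed here) and augment along all of them.

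The main technical point is the cost accounting. Within one phase each $y \in Y$ is inserted and deleted at most once, contributing $O(n\log^{4}n)$ amortized expected time via Lemma~\ref{lem:voronoi}. Each $x$ issues exactly one \emph{failed} query, which ends its neighbor exploration, and each \emph{successful} query permanently removes a vertex from the structure, so the total number of queries per phase is $O(n)$ at $O(\log^{2}n)$ each. The DFS and bookkeeping add only $O(n)$ extra work. Summing over $\Theta(1/\eps)$ phases gives the claimed $O((n/\eps)\log^{4}n)$ bound. The main obstacle is precisely this charging argument: without the ability to delete already-visited $Y$-vertices from the structure, a single BFS could touch $\Theta(m)$ edges in the worst case, which is why the deletion capability in Lemma~\ref{lem:voronoi}, rather than just a static furthest-site Voronoi diagram, is essential.
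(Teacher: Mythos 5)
Your overall architecture matches the paper: run $\Theta(1/\eps)$ phases of Hopcroft--Karp, and speed up each phase to near-linear time by replacing explicit adjacency lists with furthest-site additively-weighted Voronoi queries against a dynamic set (delete a vertex once it is discovered). The reduction of the adjacency test to $\delta(x,y) > \radius(x)$, the BFS accounting, and the phase bound are all exactly as in the paper.

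However, there is a genuine gap in your treatment of the DFS that extracts a maximal set of vertex-disjoint shortest augmenting paths. You claim this can be done \emph{``on the adjacencies recorded during BFS (no further Voronoi queries are needed here)''}. But the BFS, by construction, deletes each right-side vertex from the structure the moment it is first discovered, so what the BFS records is merely a discovery \emph{forest}: each odd-layer vertex is linked to exactly one parent in the preceding even layer, and all of its other cross-layer edges are invisible. This forest is in general not enough to find a \emph{maximal} disjoint family of shortest augmenting paths. As a concrete failure: suppose free left vertices $x_1,x_2$ are both adjacent to $y_1,y_2$ in layer $1$, with $y_1$ matched to $x_3$, $y_2$ matched to $x_4$, and $x_3,x_4$ adjacent to the free right vertices $y_3,y_4$ respectively. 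If the BFS processes $x_1$ first, it records $x_1$--$y_1$ and $x_1$--$y_2$ and nothing for $x_2$. A DFS restricted to this forest augments along $x_1$--$y_1$--$x_3$--$y_3$ and then finds nothing from $x_2$, even though the disjoint path $x_2$--$y_2$--$x_4$--$y_4$ exists in the layered graph. Since Hopcroft--Karp's $\Theta(1/\eps)$-phase guarantee rests precisely on the set of paths being maximal, this breaks correctness, not merely constants. The paper avoids this by building, for the DFS phase, a \emph{fresh} Voronoi structure per even layer $\ell$ containing exactly the vertices of that layer, so that from a node $v$ on odd layer $\ell$ one can query for yet-unvisited neighbors in layer $\ell-1$; the total work is still $O(n\log^4 n)$ per phase because every vertex lives in exactly one layer and is deleted at most once. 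Your cost accounting, which counts one failed query per $x$, only covers the BFS and would need to be extended to this DFS-side query structure as well.
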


\begin{proof}
Let the given co-bipartite graph be $G(X \cup Y)$. We will call $X$ the 
\emph{left side}, and $Y$ the \emph{right side}.
We recall that the algorithm of Hopcroft and Karp~\cite{HopcroftKarp73} proceeds in phases while
maintaining the current matching $M$ (initially empty). In the beginning of each phase, 
we run a breadth-first search to partition the nodes of the graph into \emph{layers}. 
We start the search from the unmatched nodes on the left side of the graph that form layer $0$.
On even layers of the search, we can traverse any unmatched edge. On
odd layers of the search, we can only traverse a matched edge.
The search terminates after the first layer containing an unmatched node on the right
side of the graph (or when no further layers can be created, meaning that the current 
matching $M$ is a maximum matching). This gives us the length $k$ of a shortest augmenting path.
Next, the algorithm finds a maximal set of node-disjoint augmenting paths of length $k$. This can 
be implemented by running a depth-first search from the unmatched nodes in the last layer. 
The depth-first search is only allowed to follow an edge leading to a yet unvisited node in the
previous layer, and the paths in its tree must alternate between matched and unmatched edges, 
as in the breadth-first search.

The breadth-first search and the depth-first search run in time linear in the size of 
the graph, and the length of a shortest augmenting path increases by at least 2 
in every phase. It is well-known that if the shortest augmenting path is of
length at least $2/\eps$ then the current matching $M$ is a $(1 - \eps)$-approximate 
maximum matching. Thus, it is enough to execute $1/\eps$ phases to obtain such a 
matching. However, the number of edges in our graph might be
even quadratic in $n$, so to finish the proof we need to show how to
implement a single phase in $O(n\log^4 n)$ expected time.

We first explain how to implement the breadth-first search in expected 
$O(n\log^4 n)$ time. To this end, we maintain an instance of the data structure from 
\Cref{lem:voronoi}, initially containing the set $S$ of all points corresponding
to the nodes on the right side of the graph.
We set $w_{s} = -r$ for a point $s\in S$ corresponding to a disk of radius $r$.
Then, on every even layer, we iterate over the nodes there, and for each such node 
$u$ corresponding to a disk $D(c,r)$, we need to iterate over the yet unvisited nodes $v$ 
such that $(u,v)$ is an edge. This is equivalent to obtaining (and removing) all 
points $s\in S$ such that $\delta(c,s) > r$, which in turn can be implemented by repeatedly
retrieving  $\arg\max_{s\in S} \delta(c,s)$, checking if $\delta(c,s) > r$, and if so,
removing $s$ from $S$ and repeating.
On every odd layer, we do not need the data structure, 
as each node has at most one incident matched edge.
The overall number of operations on the data structure is $O(n)$, and the deletions dominate, 
so the total time for the breadth-first search is as claimed.
For the depth-first search, we maintain a separate instance of the data structure from
\Cref{lem:voronoi} for each
even layer $\ell$, initially containing the set $S_{\ell}$ of 
all points corresponding to its nodes. Then, if we are
currently at node $v$ on an odd layer $\ell$, we need
to iterate over the yet unvisited nodes $u$ from layer $\ell-1$ such that $uv$ is an edge.
Let the disk corresponding to $v$ be $D(c,r)$, Then,  this is again equivalent to retrieving 
(and removing) all points $s\in S_{\ell-1}$ such that $\delta(c,s) > r$, and it 
can be implemented as above.
The overall number of operations on all data structures is $O(n)$, making the total time as required.
\end{proof}

For the case of unit disk graphs, \Cref{lem:matching} can be improved. For this, we just
need a  data structure that plays the role of the one in \Cref{lem:voronoi} and that is optimized for unit disk graphs. The following lemma shows how to do this,
adapting a method by Efrat, Itai, and Katz~\cite{Efrat01} to our setting:
\begin{lemma}
\label{lem:vorUDG}
Let $S$ be a set of $n$ unit disks. Then, we can preprocess $S$ in $O(n \log n)$  time
into a data structure that supports the following operations:
\begin{itemize}
    \item $\text{find}(q)$: given $q \in \mathbb{R}^2$, find a disk in the data structure that does
      \emph{not} contain $q$, or report that no such disk exists (i.e.,  $q$ lies
      in the intersection of all the disks in the data structure); and
    \item $\text{delete}(s)$: delete the unit disk $s$ from the data structure.
\end{itemize}
The time for a find-operation is $O(\log n)$, and the total time to delete all
the disks from the data structure is $O(n \log n)$.
\end{lemma}

\begin{proof}[Proof sketch]
Our data structure is very similar to the structure by Efrat, Itai, and 
Katz~\cite[Section~5.1]{Efrat01}, and we refer the reader to their paper
for further details. 

Let $\mathcal{G}$ be a regular grid with cells of edge length $1/2$.
During the preprocessing phase, we locate for every disk center in $S$
the cell of $\mathcal{G}$ that contains it. Let $C$ be a nonemtpy grid cell,
and let $S_C$ be the unit disks whose centers lie in $C$. Since the diameter of
$C$ is less than $1$, the disks in $S_C$ have non-empty intersection (e.g.,
the center of $C$ must lie in all the disks of $S_C$). We compute
the intersection $P_C$ of $S_C$. The intersection $P_C$ is a convex polygon-like structure whose
boundary consists of arcs from $S_C$, where every disk from $S_C$ contributes at most
one arc. The arcs appear along the upper and the lower boundary
of $P_C$ in reverse order of the $x$-coordinates of the corresponding
disk centers. Thus, the total complexity of $P_C$ is  $O(|S_C|)$, and
it can be computed in time $O(|S_C| \log |S_C|)$, using a divide-and-conquer
strategy.
As in the data structure by Efrat, Itai, and Katz, we compute for each $P_C$
two binary trees that represent the structure of the upper and the lower boundary of $P_C$,
respectively, as in the well-known data structure by Overmars
and van Leeuwen~\cite{ovL}. In these trees, the leaves 
correspond to the disks from $S_C$, in reverse $x$-order of their
centers. The inner nodes represent the upper (or the lower)
boundary of the intersection of the disks in the corresponding subtree (see
the paper by Efrat, Itai, and Katz for the details). The binary trees can be
computed in $O(n \log n)$ time, and the total time for the
preprocessing phase is thus $O(n \log n)$.

A find-query can be answered as follows: given $q \in \mathbb{R}^2$, we first check if
there is a non-empty grid cell $C$ that is far enough from $q$ such that $q$ must
lie outside of $P_C$. If such a cell $C$ exists, we report an arbitrary disk
from $P_C$ and are done (this takes $O(\log n)$ time, assuming that the nonempty
grid cells are stored in a suitable data structure). Otherwise, there is a 
constant number of remaining candidate cells, and for each such candidate
cell $C$, we determine whether $q$ lies outside of $P_C$. This can be done
using the  tree structures for $C$ to determine whether $q$ lies
above or below the upper and the lower boundary of $P_C$. If $q$ lies outside
of $P_C$, we can use the result of the tree-searches to 
obtain a disk in $S_C$ that does not contain $q$, as desired.
Again, this takes $O(\log n)$ time.

To delete a a disk $s$ from $S$, we find the cell $C$ that contains the center
of $s$, and we delete $s$ from $S_C$. For this, we must update the tree
structures that represent the intersection $P_C$. The details for the
algorithm and its analysis are given by Efrat, Itai, and Katz, resulting 
in the
claimed running time.
\end{proof}

Using \Cref{lem:vorUDG}, we get the following improved version of \Cref{lem:matching}
for unit disk graphs:

\begin{lemma}
\label{lem:matchingUDG}
Let $G$ be a co-bipartite unit disk graph on $n$ vertices. A $(1-\epsilon)$-approximate matching of
$\overline{G}$ can be computed in expected $O((n/\eps)\log n)$ time. 
\end{lemma}
\begin{proof}
The proof is identical to the proof of \Cref{lem:matching}, but using the
data structure from \Cref{lem:vorUDG} instead of \Cref{lem:voronoi}.
\end{proof}

 \Cref{lem:matching} and \Cref{lem:matchingUDG} allow us to compute the cardinality of the sought $(1-\eps)$-approximate clique.
Computing the corresponding subset of nodes is a bit more involved.

\begin{theorem}
\label{lem:approximate_clique}
Let $G$ be a co-bipartite disk graph with $n$ vertices. We can compute a $(1-\eps)$-approximate clique of $G$ in expected $O((n/\eps)\log^4 n)$ time. When $G$ is a unit disk graph, such a clique can be computed in expected $O((n/\eps)\log n)$ time.
\end{theorem}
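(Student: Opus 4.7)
The plan is to extend the algorithm of \Cref{lem:matching} so that it outputs not just the cardinality but an explicit $(1-\eps)$-approximate independent set in $\overline{G}$; the complementary set of vertices is then the desired $(1-\eps)$-approximate clique in $G$ via \Cref{lem:approx}.

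First, I would invoke \Cref{lem:matching} with parameter $\eps/2$, producing in expected $O((n/\eps)\log^4 n)$ time a matching $M$ of $\overline{G}$ with $|M|\geq (1-\eps/2)|M^*|$. A key property I will exploit is that the Hopcroft--Karp procedure used there runs for $2/\eps$ phases, so every augmenting path with respect to $M$ has length at least $2/\eps+1$.

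Next, I would run one additional alternating BFS from the unmatched left vertices of $M$, implemented with the data structure of \Cref{lem:voronoi} in exactly the same fashion as in the proof of \Cref{lem:matching}, at an additional expected cost of $O(n\log^4 n)$. Let $L_0,L_1,L_2,\ldots$ denote the resulting layers, with even-indexed layers on the left and odd-indexed layers on the right. The lower bound on augmenting-path length implies that every vertex in any odd layer $L_{2k-1}$ with $k\le 1/\eps$ is matched, and, crucially, its matched left partner cannot have been placed in an earlier layer (otherwise a short augmenting path would exist); hence $|L_{2k}|=|L_{2k-1}|$ for every $1\le k\le 1/\eps$. Since the $1/\eps$ disjoint odd layers $L_1,L_3,\ldots,L_{2/\eps-1}$ consist of distinct matched right vertices, their total size is at most $|M|$, so by averaging there is an odd index $j^*\in\{1,3,\ldots,2/\eps-1\}$ with $|L_{j^*}|\le \eps|M|$.

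Using this index, I would form the truncated König cover
\[
C=\bigl(L\setminus(L_0\cup L_2\cup\cdots\cup L_{j^*-1})\bigr)\cup\bigl(L_1\cup L_3\cup\cdots\cup L_{j^*}\bigr).
\]
A short edge-by-edge check verifies that $C$ is a vertex cover of $\overline{G}$: any edge $(u,v)$ with $u$ in an even layer $L_{2k}$ for $2k\le j^*-1$ has $v$ in some odd layer of index $\le j^*$, hence $v\in C$; otherwise $u\notin L_0\cup\cdots\cup L_{j^*-1}$, hence $u\in C$. Cancelling intermediate terms using $|L_{2k}|=|L_{2k-1}|$ gives $|C|=|M|+|L_{j^*}|\leq (1+\eps)|M|\leq (1+\eps)|M^*|=(1+\eps)|C^*|$, where the last equality is K\H{o}nig's theorem. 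By \Cref{lem:approx}, the set $I=V(\overline{G})\setminus C$ is therefore a $(1-\eps)$-approximate independent set in $\overline{G}$, equivalently a $(1-\eps)$-approximate maximum clique in $G$, and the total expected running time is $O((n/\eps)\log^4 n)$. The main obstacle is that the standard K\H{o}nig construction only yields $|C|=|M^*|$ when $M$ itself is maximum; the guarantee on shortest augmenting-path length after $1/\eps$ Hopcroft--Karp phases is exactly what allows us to truncate the construction at a sparse odd BFS layer with only $\eps|M|$ additive overhead.
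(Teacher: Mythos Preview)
Your proof is correct and follows essentially the same approach as the paper: both run $\Theta(1/\eps)$ Hopcroft--Karp phases, perform one additional alternating BFS, choose a sparse odd layer by averaging, and return the complement of the resulting truncated K\H{o}nig-style cover. Your telescoping via $|L_{2k}|=|L_{2k-1}|$ is an equivalent route to the paper's bound $|K|\le |M|+|Y_i|$; the only cosmetic points are that invoking \Cref{lem:matching} with parameter $\eps/2$ rather than $\eps$ is unnecessary, and that the reason the matched partner of $v\in L_{2k-1}$ cannot sit in an earlier even layer is the alternating-BFS rule itself (a node in $L_{2j}$ with $j\ge 1$ is reached via its unique matched edge from $L_{2j-1}$), not augmenting-path length.
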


\begin{proof}
We provide the analysis for general disk graphs. The case of unit disk
graphs is handled in the same way, using the faster data structure
from \Cref{lem:vorUDG}.

Our strategy is to find a $(1+\eps)$-approximate vertex cover in $\overline{G}$  
in $O((n/\eps)\log^4 n)$ expected time, and we return its complement. By 
\cref{lem:approx}, this constitutes a $(1-\eps)$-approximate clique in $G$.

To obtain the approximate vertex-cover, we proceed as follows:  recall that $X$ and $Y$ are the nodes
on the left and on the right side of $G$.
As in the proof of \cref{lem:matching}, we start with executing $\ell=1/\eps$ phases of the 
Hopcroft-Karp algorithm. This takes expected time $O((n/\eps)\log^4 n)$, and it produces a
$(1-\eps)$-approximate maximum matching $M$. Next, we again run a breadth-first search to 
partition the nodes into layers. Now, however, we do not need to
terminate the search after reaching the first layer containing an unmatched node 
of $Y$. The BFS gives us a partition of $X$ and $Y$ into subsets
\[ 
X_{0}, Y_{0}, X_{1}, Y_{1}, X_{2}, Y_{2}, \dots 
\]
corresponding to the consecutive layers, where $X_{0},X_{1},\dots \subseteq X$, and 
$Y_{0},Y_{1},\dots\subseteq Y$,
as well as the remaining part $X' = X\setminus \bigcup_{i}X_{i}$ and 
$Y' = Y\setminus \bigcup_{i} Y_{i}$.
Because we have run $\ell$ phases of the Hopcroft-Karp algorithm, 
the layers $Y_{0},Y_{1},\dots,Y_{\ell-1}$ do not contain any unmatched nodes, and hence 
$\sum_{i}|Y_{i}| = |M|$. We choose $i\in\{0,1,\dots,\ell-1\}$ such that
$|Y_{i}| \leq |M| / \ell$, and  we define the following set $K$:
\[ 
K := (X \setminus (X_{0}\cup X_{1}\cup \dots \cup X_{i})) \cup (Y_{0} \cup Y_{1} \cup \dots \cup Y_{i}) .
\] 
We will now establish that $K$ is a vertex cover for $G$, with size 
$(1 + \eps)|M|$.

Consider an edge $e = xy$, where $x\in X$ and $y\in Y$. We need to show 
that $x\in K$ or $y\in K$. If $x \in  K$, we are done. Thus, suppose that
$x \not\in K$, and we need to establish that $y \in K$. Since $x \not \in K$,
we have $x\in X_{j}$, for some $j \in \{0,1,\dots,i\}$. 
We consider two cases.
\begin{itemize}
\item \textbf{Case 1: $e\in M$}: 
Because $x\in X_{j}$ is matched, we must have $j\geq 1$ and we have reached $x$ through an
augmenting path ending with $e$. Thus, $y\in Y_{j-1} \subseteq K$.
\item \textbf{Case 2: $e\not\in M$}: 
We can extend an augmenting path ending at $x$ with $e$ to obtain an augmenting path
ending at $y$. Thus, $y\in Y_{0}\cup Y_{1}\cup\dots\cup Y_{j} \subseteq K$.
\end{itemize}
This shows that $K$ is a vertex cover.

We now move to analyzing the size of $K$. First, we argue that every node 
in $K$ is an endpoint of an edge in $M$.
We prove this separately for the nodes in $X$ and in $Y$.
\begin{itemize}
\item \textbf{Case 1: $x\notin X_{0}\cup X_{1}\cup\dots \cup X_{i}$:} 
Such a node is matched, as it does not belong to $X_{0}$.
\item \textbf{Case 2: $y\in Y_{0} \cup Y_{1} \cup \dots \cup Y_{i}$:} 
None of the sets $Y_{0},Y_{1},\ldots,Y_{\ell-1}$ contain an unmatched node,
so such a node is matched.
\end{itemize}
Next, we consider the situation when an edge $xy \in M$ has two endpoints in $K$. 
Then, $x\notin X_{0}\cup X_{1}\cup\dots \cup X_{i}$
and $y\in Y_{j}$, for some $j\in\{0,1,\dots,i\}$. 
Now, it is possible to extend an augmenting path ending at $y$
by the matched edge $yx$ to obtain an augmenting path ending at $x$, so 
$x\in X_{j+1}$. Thus, $j=i$, and the considered
matched edge connects $x\in X_{i+1}$ and $y\in Y_{i}$. 
By the choice of $i$, there are at most $|M| / \ell$ such edges.
This allows us to bound the number of nodes in $K$ by $|M|+|M|/\ell = (1+\eps)|M|$.

Together, this shows that $K$ is an ($1+\eps)$-approximate vertex cover.

As in Lemma~\ref{lem:matching}, running the breadth-first search takes 
$O(n\log^4 n)$ expected time.
After that, constructing $K$ can easily be done in additional linear time.
\end{proof}

\section{Unit Disk Graphs}
\label{sec:Unit_Disk_Graphs}
We describe an algorithm that computes an $(1-\eps)$-approximate 
clique in a unit disk graph  with probability at least $1 - \delta$ 
in $O((n/\eps^{2})\log(n)\log (1/\delta))$ 
expected time.
Let $P$ be a set of $n$ points in the plane, and let $\mathcal{D}(P) = \{D(p, 1) \mid p \in P\}$ be the set of unit disks whose centers are given by $P$. For convenience, we will sometimes identify disks with their centers. A set $X \subseteq P$ is said to form a clique if all the disks in $\mathcal{D}(X)$ intersect pairwise. 

Let $\mathcal{G}$ be a regular grid where each grid cell has diameter $2$, and let $\mathcal{C}$ be the collection of grid cells that contain at least one point from $P$.
Let $C \in \mathcal{C}$ be such a nonempty cell. 
The \emph{extension} of $C$, denoted by $C^+$, is the $5 \times 5$-block of grid cells whose central cell is $C$. 
Furthermore, let $P_C \subseteq P$ be the set of disk centers that lie in any cell of the extension $C^+$; for an illustration, see~\Cref{fig_extension_C}.
Each point of $P_C$ is uniquely assigned to exactly one cell of $C^+$: if a point lies on the boundary shared by multiple cells of $C^+$, it is assigned arbitrarily to one of them.
Then, a simple diameter argument shows that every clique in $P$ must be contained within the extension of some nonempty cell.
\begin{figure}[ht]
    \centering
    \includegraphics[width=0.38\linewidth]{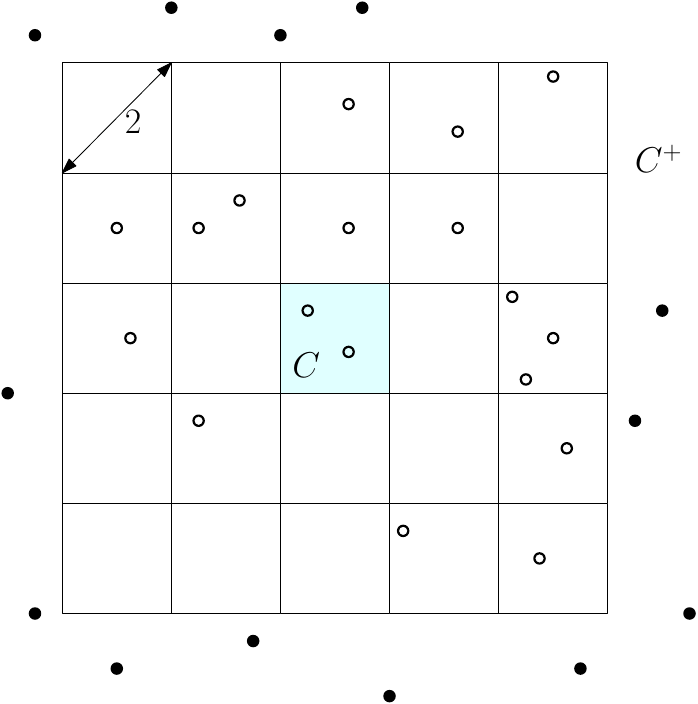}
    \caption{A cell $C$ (the central cell) and its extension $C^+$, where each cell in $C^+$ has a diameter 2. All points (both solid and hollow) belong to $P$, while the hollow points also belong to $P_C$.}
    \label{fig_extension_C}
\end{figure}
\begin{claim}
\label{clm:cliqueext}
Let $X$ be a clique in $P$. Then, there exists a
cell $C \in \mathcal{C}$ such that $X \subseteq P_C$.
\end{claim}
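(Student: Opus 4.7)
The plan is to reduce the claim to a simple geometric inclusion: every Euclidean disk of radius $2$ centered at a point of a grid cell $C$ is fully contained inside the $5\times 5$ extension $C^+$. Once this inclusion is established, the claim follows because any two centers $p,q\in X$ of intersecting unit disks satisfy $\dist(p,q)\le 2$, so the whole clique is trapped in a radius-$2$ neighbourhood of any one of its points.

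First, I would pick an arbitrary point $p \in X$ and let $C \in \mathcal{C}$ be the (nonempty) grid cell that contains $p$. For any other $q\in X$, the unit disks $D(p,1)$ and $D(q,1)$ intersect, so $\dist(p,q)\le 2$. In particular, $|p_x-q_x|\le 2$ and $|p_y-q_y|\le 2$, so $q$ lies in the axis-aligned square of side $4$ centred at $p$.

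Next, I would use the grid geometry to show that this square lies inside $C^+$. Since each grid cell has diameter $2$, it is an axis-aligned square of side $\sqrt{2}$. The extension $C^+$ is the $5\times 5$ block around $C$, which extends exactly two full cells, i.e.\ a distance of $2\sqrt{2}$, beyond $C$ in each of the four cardinal directions. Because $p\in C$, the distance from $p$ to the boundary of $C^+$ is at least $2\sqrt{2} > 2$ in every cardinal direction. Hence $\{q'\in\mathbb{R}^2 : |p_x-q'_x|\le 2, |p_y-q'_y|\le 2\}\subseteq C^+$, and in particular $q\in C^+$. Since the assignment of boundary points does not matter for membership in $P_C$ (every point of $C^+$ is assigned to some cell of $C^+$), we conclude $q\in P_C$, and so $X\subseteq P_C$.

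There is no real obstacle here: the only thing to be careful about is getting the grid parameters right, i.e.\ verifying that the side length $\sqrt{2}$ implied by ``diameter $2$'' is enough to make the slack $2\sqrt{2}$ strictly larger than the clique diameter $2$. Any choice of reference point $p\in X$ works, and the assignment convention for points lying on shared boundaries of cells of $C^+$ is irrelevant for the set-theoretic conclusion.
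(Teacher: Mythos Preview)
Your proof is correct and follows the same approach as the paper: pick any $p\in X$, let $C$ be its cell, and observe that every other center of $X$ is within distance $2$ of $p$ and hence lies in $C^{+}$. The paper's version is terser and does not spell out the side-length computation showing that the $2\sqrt{2}$ margin of $C^{+}$ beyond $C$ exceeds $2$; you have simply made that step explicit.
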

\begin{proof}
Let $p \in X$, and let $C \in \mathcal{C}$ be
the cell that contains $p$. Since every
other disk center in $X$ has distance at most $2$ from $p$,
it follows that $X$ must lie in $C^+$. Hence, we have $X\subseteq P_C$,
as claimed.
\end{proof}

Furthermore, a simple packing argument shows that every extension of a nonempty cell $C$
contains a clique of size linear in $|P_C|$.
\begin{claim}
\label{clm:maxcl}
Let $X^*$ be a maximum clique in $P_C$. Then, 
we have $|X^*| \geq |P_C|/25$.
\end{claim}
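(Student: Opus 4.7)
The plan is a pigeonhole plus a one-line geometric observation. The extension $C^+$ consists of exactly $25$ grid cells, and $P_C$ is the union of the points assigned to these cells, with each point of $P_C$ assigned to a unique cell of $C^+$. Therefore, by the pigeonhole principle, there exists at least one cell $C' \in C^+$ whose assigned subset $P_{C'} \subseteq P_C$ satisfies $|P_{C'}| \geq |P_C|/25$.

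Next I would argue that every such $P_{C'}$ is itself a clique. This is where the diameter-$2$ choice of the grid is used: any two points $p, q \in P_{C'}$ lie in the same cell of diameter $2$, so $\dist(p,q) \leq 2$, which means the two unit disks $D(p,1)$ and $D(q,1)$ intersect. Hence $P_{C'}$ induces a clique in the unit disk graph on $P_C$. Since $X^*$ is a maximum clique in $P_C$, we conclude
\[ |X^*| \geq |P_{C'}| \geq |P_C|/25, \]
which is the desired bound.

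There is no real obstacle here; the only subtlety is making sure the assignment rule for boundary points (fixed arbitrarily in the preceding paragraph) does not spoil the pigeonhole argument, and it does not because each point of $P_C$ is assigned to exactly one cell of $C^+$, so the assigned subsets partition $P_C$ into $25$ parts.
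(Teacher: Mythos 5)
Your proof is correct and follows essentially the same approach as the paper: partition $P_C$ into the $25$ cells of $C^+$ by the fixed assignment, pigeonhole to find a cell with at least $|P_C|/25$ points, and use the diameter-$2$ bound to see that those points form a clique.
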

\begin{proof}
The extension $C^+$ consists of $25$ grid cells with pairwise disjoint interiors, where each cell has diameter $2$.
Hence, for any such cell $C' \subset C^+$, the set $P_C \cap C'$ must form a clique, since all these disks are pairwise intersecting.
Recall that if a disk center lies on the boundary of a cell, it is assigned arbitrarily to one of the adjacent cells so that each point in $P_C$ is uniquely assigned to exactly one cell in $C^+$.
Since there are $25$ such cells, 
there exists at least one cell in $C^+$ containing at least $|P_C|/25$ disk centers.
It follows that $P_C$ contains a clique of size at least $|P_C|/25$, and therefore the maximum clique $X^*$ is at least of size $|P_C|/25$.
\end{proof}

Now, let $C \in \mathcal{C}$.
We select two disk centers $p_1, p_2 \in P_C$ independently and uniformly at random. Let $s$ be the line segment of length 2 that starts at $p_2$ and is directed toward $p_1$, so that $p_1$ lies either on $s$ or on its extension along the same direction.
Let $x$ be the other endpoint of $s$, and let $L = D(x, 2) \cap D(p_2,2)$ be the lens with axis $s$; see~\cref{fig_lens_L} for an illustration.
We determine the set $P_L = L \cap P_C$ of disks centers that lie in $L$, and we find an $(1-\eps/2)$-approximate maximum clique $X$ for $P_L$, using~\cref{lem:approximate_clique} from~\cref{sec:Computing_an_Approximate_Clique}. 
The lens $L$ is partitioned by segment $s$ into two parts: the part to the left of $s$ and the part to the right of $s$. Without loss of generality, the points on $s$ belong to the left part. Any two points in the same part have a distance at most $2$. Thus, all disks with centers in the same part form a clique. In other words, the intersection graph of unit disks centered at $P_L$ is a co-bipartite graph. %
Consequently, as we show below, the clique $X$ is an
$(1-\eps)$-approximate clique for $P_C$ with
probability $\Omega(\eps)$.
\begin{figure}[t]
    \centering
    \includegraphics[page=2,width=0.36\linewidth]{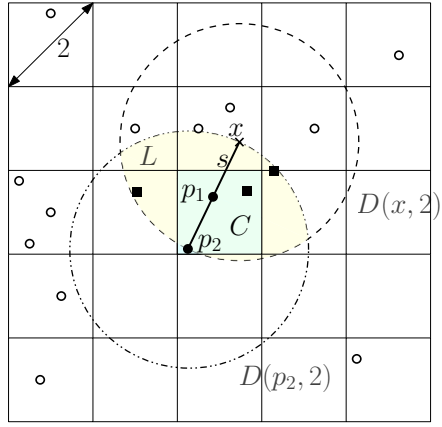}
    \caption{The cell $C$ is represented by the light-blue colored
      cell with $p_1, p_2\in C$. The thick black line denotes the line
      segment $s=\overline{p_2 x}$ of length $2$, passing through
      $p_1$. The dotted disk represents $D(x,2)$, while the
      dash-dotted disk represents $D(p_2,2)$. The yellow-shaded region
      corresponds to the lens $L = D(x,2) \cap D(p_2,2)$, formed along
      the axis $s$.
    }
    
    \label{fig_lens_L}
\end{figure}
\begin{lemma}
\label{lem:gridclique}
Let $X^*$ be a maximum clique in $P_C$.
Then, with probability
$\Omega(\eps)$, we have 
$|X| \geq (1-\eps)|X^*|$.
\end{lemma}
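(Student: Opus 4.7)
My plan is to prove the lemma by combining a geometric claim about diameter pairs of $X^*$ with a probabilistic analysis. The guiding intuition: if the sampled pair $(p_2, p_1) \in P_C \times P_C$ approximately realizes the diameter of $X^*$, the lens $L$ captures almost all of $X^*$, and this favorable event has probability $\Omega(\eps)$.

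First, I will prove the following geometric claim: if $p_2, p_1 \in X^*$ realize the diameter of $X^*$, meaning $\dist(p_2, p_1) = d^{*} := \max_{r, r' \in X^*} \dist(r, r') \leq 2$, then $X^* \subseteq L(p_2, p_1)$. In coordinates with $p_2 = (0,0)$ and $p_1 = (d^{*}, 0)$, every $r = (r_x, r_y) \in X^*$ satisfies $\dist(r, p_1) \leq d^{*}$, which rearranges to $r_x \geq (r_x^2 + r_y^2)/(2 d^{*})$. Since $d^{*} \leq 2$, this gives $r_x \geq (r_x^2 + r_y^2)/4$, equivalently $(r_x - 2)^2 + r_y^2 \leq 4$, so $r \in D(x, 2)$. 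Combined with $r \in D(p_2, 2)$, we get $r \in L$.

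Second, I will show that this argument is robust: if $p_2$ is among the leftmost $\Theta(\eps |X^*|)$ points of $X^*$ in the diameter direction $\hat{v}^{*}$, and $p_1$ lies in an appropriate right portion of $X^*$ so that $(p_1 - p_2)/\|p_1 - p_2\|$ approximates $\hat{v}^{*}$, then $L(p_2, p_1)$ misses at most $\eps |X^*|/2$ points of $X^*$. The key calculation tracks the inequality chain from the exact argument through the perturbation; the missed region, viewed in $p_2$-centered and axis-aligned coordinates, is confined to a neighborhood of $p_2$ whose $X^*$-mass is controlled by the rank of $p_2$ in the projection order, using that $X^*$ lies in the diameter lens $D(p^{*}, d^{*}) \cap D(q^{*}, d^{*})$ and is therefore ``thin'' near $p^{*}$.

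Combining these with \Cref{clm:maxcl} (which gives $|X^*| \geq |P_C|/25$): sampling $p_2 \in X^*$ among the leftmost $\Theta(\eps |X^*|)$ happens with probability $\Omega(\eps)$, and conditionally sampling $p_1 \in X^*$ inside the appropriate right region happens with probability $\Omega(1)$, so by independence the combined probability is $\Omega(\eps)$. Under this event, \Cref{lem:approximate_clique} applied to $P_L$ yields $|X| \geq (1 - \eps/2)\,|X^* \cap L| \geq (1 - \eps/2)^2 |X^*| \geq (1 - \eps)|X^*|$. The main obstacle is the quantitative robustness step: making precise that the perturbation of $p_2$ from the exact diameter endpoint, together with the direction error introduced by sampling $p_1$, yields a missed region of size only $O(\eps |X^*|)$ rather than the $O(\sqrt{\eps}\,|X^*|)$ that a naive first-order direction-error analysis would suggest; this will require exploiting that the orthogonal extent of $X^*$ scales as $\sqrt{d^{*}\pi}$ near the endpoints of the diameter lens.
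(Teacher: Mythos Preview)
Your diameter-based plan introduces a robustness step that you yourself identify as the main obstacle, and this gap is real. Once you fix the diameter direction $\hat v^{*}$ and require only that $p_2$ lie among the leftmost $\Theta(\eps|X^*|)$ points in that direction and $p_1$ lie in some $\Omega(1)$-fraction ``right region'' of $X^*$, the unit vector $(p_1-p_2)/\|p_1-p_2\|$ is not controlled to within $O(\eps)$ of $\hat v^{*}$; a priori the angular error can be $\Theta(1)$, and the lens $L(p_2,p_1)$ can then miss a constant fraction of $X^*$. Your proposed fix, appealing to the orthogonal thinness of the diameter lens near its tips, would at best give an $O(\sqrt{\eps})$ bound on the missed fraction (exactly the concern you raise), and there is no argument supplied that improves this to $O(\eps)$. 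As stated, the proof does not go through.

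The paper sidesteps the entire perturbation analysis by never invoking the diameter. Its argument: with probability $\geq 1/25$ the first sample $p_1$ lies in $X^*$; now order $X^*$ by distance from $p_1$ (not by projection onto any fixed direction) and let $X^+$ be the furthest $(\eps/2)|X^*|$ points. With probability $\geq \eps/50$ the second sample $p_2$ lands in $X^+$. Conditioned on this, every point of $X^- := X^*\setminus X^+$ lies in $D(p_1,\dist(p_1,p_2))$, and since $x$ is chosen so that $\dist(x,p_1)+\dist(p_1,p_2)=2$, a single triangle inequality gives $D(p_1,\dist(p_1,p_2))\subseteq D(x,2)$. Together with $X^-\subseteq D(p_2,2)$ (both are in the clique $X^*$), this yields $X^-\subseteq L$ exactly, with no perturbation needed. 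The key idea you are missing is to sort $X^*$ radially around the \emph{sampled} point $p_1$ rather than along a fixed, unknown direction; then the containment $X^-\subseteq L$ becomes automatic.
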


\begin{proof}
First, by~\Cref{clm:maxcl}, we know that $|X^*| \geq |P_C|/25$.
Thus, with probability at least $1/25$, we have that $p_1 \in X^*$.
Now, we sort the centers in $X^*$ in the increasing order of their distances from $p_1$,
and let $\pi^*$ be the resulting ordering of $X^*$. 
Let $X^+$ be those centers from $X^*$  that have rank at least
$(1-\eps/2)|X^*|$ in $\pi^*$, and let $X^- = X^*
  \setminus X^+$ (in other words, $X^+$ contains
the $(\eps/2)|X^*|$ points in $X^*$ that are furthest away from $p_1$,
and $X^-$ contains the $(1-\eps/2)|X^*|$ remaining
points).
Since
\[
|X^+| = \frac{\eps}{2}|X^*| \geq \frac{\eps}{50}\,|P_C|,
\]
it follows that with probability
at least $\eps/50$, the second sampled center $p_2$ lies in $X^+$.

Since $p_1$ and $p_2$ are sampled independently, we
can conclude that  with probability at least
$\eps/1250$,
it holds that $p_1 \in X^*$
and that $p_2 \in X^+$. Henceforth, we assume that indeed this is the case.
Since $p_1$ and $p_2$ both lie in $X^*$, the distance
$\dist(p_1, p_2)$ between $p_1$ and $p_2$ is at most $2$. 
From this, it follows that 
$D(p_1, \dist(p_1, p_2)) \subseteq D(x, 2)$
since by the choice of $x$, we have $\dist(x, p_1) + \dist(p_1, p_2) = 2$, which implies that for every point $y \in  D(p_1, \dist(p_1, p_2))$, the triangle inequality gives
\[
\dist(x, y) \leq \dist(x, p_1) + \dist(p_1, y) \leq \dist(x, p_1) + \dist(p_1, p_2) = 2,
\]
as needed.

As $p_2 \in X^+$, and since all the points in $X^+$ are at least as
far away from $p_1$ as all the points in $X^-$, it follows that 
$X^- \subseteq D(p_1, \dist(p_1, p_2)) \subseteq D(x, 2)$.
Moreover, since $X^- \subseteq X^*$ and $p_2 \in X^*$, we also have
$X^- \subseteq D(p_2, 2)$.
This implies that $X^-$ lies in the lens $L$ defined by
$D(x, 2)$ and $D(p_2, 2)$. In particular, $P_L$ contains a clique of size $(1-\eps/2)|X^*|$, and since $X$ is a $(1-\eps/2)$-approximate clique for $P_L$, we have
\[
|X| \geq  (1-\eps/2)(1-\eps/2) |X^*| = 
(1 - \eps + \eps^2/4)|X^*| \geq
(1 - \eps)|X^*|.
\]
This means that $X$ is an $(1-\eps)$-approximate maximum clique for $P_C$, as claimed.
\end{proof}

We can now present the complete algorithm.


\begin{theorem}
\label{thm:approx_clique_in_udg}
Let $P$ be a set of $n$ points in the plane, let $\eps > 0$ be a
parameter. There is a randomized algorithm that
runs in expected time
$O((n/\eps^2)\log n)$ and computes an
$(1-\eps)$-approximate clique for $\mathcal{D}(P)$ with probability at least $1/2$. 
\end{theorem}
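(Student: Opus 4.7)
The plan is to apply the randomized procedure behind \cref{lem:gridclique} to every nonempty grid cell in parallel, running enough independent trials per cell to boost the per-trial success probability $\Omega(\eps)$ into a constant for the single cell that actually contains a global maximum clique.

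First I would construct the grid $\mathcal{G}$ from the beginning of this section and, by floor-based hashing, list the nonempty cells $\mathcal{C}$ together with, for every $C\in\mathcal{C}$, the set $P_C$ of points lying in the $5\times 5$ extension $C^{+}$. Each point of $P$ belongs to at most $25$ extensions (those of the $5\times 5$ block centred on its own cell), so this preprocessing runs in $O(n)$ time and yields the packing bound $\sum_{C\in\mathcal{C}}|P_C|\le 25\,n$. Then, for every $C\in\mathcal{C}$, I would execute $t=\lceil (\ln 2)/(c\eps)\rceil = O(1/\eps)$ independent trials of the procedure described just before \cref{lem:gridclique}: draw $p_1,p_2\in P_C$ uniformly at random, build the lens $L$ and the set $P_L=L\cap P_C$ in $O(|P_C|)$ time, and invoke \cref{lem:approximate_clique} with accuracy parameter $\eps/2$ on the co-bipartite graph $G(\mathcal{D}(P_L))$. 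Among all cliques returned across all trials and all cells, the algorithm outputs one of maximum cardinality; any such output is automatically a valid clique of $\mathcal{D}(P)$.

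For correctness, \cref{clm:cliqueext} ensures that some cell $C^{*}\in\mathcal{C}$ satisfies $X^{*}\subseteq P_{C^{*}}$ for a global maximum clique $X^{*}$, so the maximum clique of $P_{C^{*}}$ coincides with $X^{*}$. By \cref{lem:gridclique}, each of the $t$ independent trials executed on $C^{*}$ returns a $(1-\eps)$-approximate clique for $X^{*}$ with probability at least $c\eps$ for an absolute constant $c>0$. With $t$ chosen as above, the probability that every one of these trials fails is at most
\[
(1-c\eps)^{t}\;\le\;e^{-c\eps t}\;\le\;\tfrac{1}{2},
\]
which gives the claimed success probability; trials on cells other than $C^{*}$ only produce additional valid cliques and can never decrease the best output size.

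The main point requiring care is the running-time bookkeeping. A single trial on cell $C$ spends $O(|P_C|)$ time assembling $P_L$ and $O((|P_L|/\eps)\log^{4}|P_L|)=O((|P_C|/\eps)\log^{4} n)$ expected time inside \cref{lem:approximate_clique}. With $t=O(1/\eps)$ trials per cell, the expected cost charged to $C$ is $O((|P_C|/\eps^{2})\log^{4} n)$, and summing over $\mathcal{C}$ via the packing bound gives
\[
\sum_{C\in\mathcal{C}} O\!\left(\frac{|P_C|}{\eps^{2}}\log^{4} n\right) \;=\; O\!\left(\frac{n}{\eps^{2}}\log^{4} n\right),
\]
as required. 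The only conceptual risk is that dependence among the $t$ trials on $C^{*}$ could prevent their failure probabilities from multiplying, but this is avoided by drawing $p_1,p_2$ afresh in each trial and by the independence of the internal randomness of \cref{lem:approximate_clique}.
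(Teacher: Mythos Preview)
Your proposal is correct and follows essentially the same approach as the paper's own proof: iterate over all nonempty cells, run $O(1/\eps)$ independent trials of the \cref{lem:gridclique} procedure per cell, return the overall largest clique, and bound the total work via the packing inequality $\sum_{C}|P_C|=O(n)$. The only cosmetic difference is that you use floor-based hashing to list the cells in $O(n)$ expected time whereas the paper quotes an $O(n\log n)$ bound, but this is immaterial given the dominant $O((n/\eps^{2})\log^{4} n)$ term.
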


\begin{proof}
Let $C \in \mathcal{C}$ be a nonempty grid
cell. By~\Cref{lem:gridclique}, and using~\Cref{lem:approximate_clique}, one can compute, in $O((|P_C|/\eps) \log n)$ expected time, a clique for $P_C$ that, with probability $\Omega(\eps)$, is an $(1-\eps)$-approximate clique for $P_C$.
By repeating this process $O(1/\eps)$ times, and selecting the largest of
the resulting cliques, one can compute in $O((|P_C|/\eps^2) \log n)$
expected time, a clique that is an $(1-\eps)$-approximate clique for
$P_C$ with probability at least $1/2$.

We do this for $P_C$ of every nonempty cell $C \in \mathcal{C}$, and
we return the largest of the resulting cliques. (Note that all non-empty cells can be easily found in $O(n\log n)$ time~\cite{Wang23}). 
By~\Cref{clm:cliqueext}, there exists a cell $C \in \mathcal{C}$ such that $P_C$ contains a max-clique of $P$, so with probability at least $1/2$, this gives an $(1-\eps)$-approximate maximum clique for $P$.
The total running time is proportional to 
\[
\sum_{C \in \mathcal{C}} \frac{|P_C|}{\eps^2} \log n = 
\frac{\sum_{C \in \mathcal{C}} |P_C|}{\eps^2} \log n  = 
O((n/\eps^2) \log n), 
\]
since every point from $P$ appears in $O(1)$ sets $P_C$.
The theorem follows.
\end{proof}

By a standard argument, for any value $\delta > 0$, we can raise the probability of success to  $1-\delta$ by  independently repeating the algorithm of~\Cref{thm:approx_clique_in_udg} $O(\log(1/\delta))$ times and by taking the best solution. This increases the running time by an $O(\log(1/\delta))$ factor.

\section{Disk graphs with different radii}
\label{sec:Different_Radii}

We first review the algorithm by Keil and Mondal~\cite{keil_et_al:LIPIcs.SoCG.2025.63}. The main idea is simple and elegant, and it goes as follows. Suppose that there are $t$ distinct radii
$r_1,\dots,r_t$, in increasing
order. One can enumerate all radii that appear in a maximum clique and, for each radius $r_i$, enumerate all disk pairs to find the one with centers $a_i, b_i$ that are, respectively, leftmost and rightmost (in the optimum) along the $x$-axis. Let $X$ be the set of these $2t$ disks.
For each pair of centers $a_i, b_i$, consider their vertical slab and find the set $L_i$ of all radius-$r_i$ disks that intersect all disks in $X$ and have their centers within the slab and above the line segment $a_i b_i$. Similarly, find the set $R_i$ of all radius-$r_i$ disks that intersect all disks in $X$ and have their centers in the slab and below the line segment $a_ib_i$. Take $L=L_1\cup L_2 \cup \dots \cup L_t$ and $R=R_1\cup R_2  \cup\dots\cup R_t$. A simple geometric argument shows that the graph induced by $L\cup R$ is co-bipartite, and thus, similar to the case of unit disks, one can find a maximum clique by finding a maximum matching. The algorithm runs in time $O(2^t n^{2t} (f(n)+n^2))$, where $f(n)$ is the time to compute a maximum matching in an $n$-vertex bipartite graph. The factor $n^{2t}$ comes from the enumeration to find the leftmost and rightmost disks in the optimal solution for every radius $r_i$. 

In this section, we give an efficient parameterized approximation scheme w.r.t.\@ the number $t$ of different radii and $\eps$. There are two places where we can speed up the algorithm described above if approximate solutions are allowed. First, instead of finding the leftmost and rightmost disks for each radius by enumeration, we use random sampling to find two centers $a_i, b_i$ that are close (in terms of rank) to the leftmost and rightmost disks in the optimal clique. Second, as in  the case of unit disk graphs, we can replace finding a maximum clique by finding an approximate maximum clique using~\Cref{lem:approximate_clique}.

Let $\D_i$ be the set of disks of radius $r_i$ and let $D^*$ be a maximum clique, with $|\D^*|= k^*$. In the next subsection, we will assume that $\D^*$ contains at least $\eps k$ disks from each $\D_i$, for some parameter $k \in \mathbb{N}$. Later, in Section~\ref{sub:The_complete_algorithm}, we will drop this assumption using brute-force.

\subsection{A special case: Disks of every radius contribute to the maximum clique}
\label{sub:An_algorithm_for_a_fixed_smallest_disk}
In this section, we will compute an $(1-\eps)$-approximate maximum clique that
has at least one disk from each radius.

Our algorithm is simple: we (independently) pick a disk
$D(o,r_1)$ uniformly at random $m_1 = \Theta(n/k\eps)$ times and compute the subset of disks $\D'$ that intersect $D(o,r_1)$, where $r_1$ is the smallest radius.
With each choice of $D(o, r_1)$ and the corresponding $\D'$, we repeat the following procedure $m_2= \Theta((k^*/k\eps^2)^{2t})$ times:

\begin{enumerate}
\item For each radius $r_i$: uniformly at random pick two disks
  $D(a_i,r_i)$ and $D(b_i,r_i)$ from $\D'_i = \D_i \cap \D'$. Let $X$
  denote the set of the selected disks for all $r_i$. Without loss of generality, assume that $a_i$ is to the left of $b_i$. We interpret $a_i$ as the leftmost disk
  of radius $r_i$ in the clique, and $b_i$ as the rightmost disk in
  the clique.

\item Compute the subset of disks $\D'' \subseteq \D'$ that intersect
  all disks in $X$ (by explicitly checking if a disk $D(c,r) \in \D'$
  intersects all disks in $X$).
\item Form two sets of disks $L$ and $R$. For each vertical slab with  $a_i, b_i$ on the boundary, find all the radius-$r_i$ disks $L_i\subseteq \D''$ with center within the slab and above the line segment $a_i b_i$. Similarly, find all the radius-$r_i$ disks $R_i\subseteq \D''$ with centers in the slab and below the line segment $a_ib_i$. Take $L=L_1\cup L_2\cup\dots\cup L_t$ and $R=R_1\cup R_2 \cup\dots \cup R_t$. 
\item Use the algorithm from Theorem~\ref{lem:approximate_clique} to
  compute an $(1-\eps/2)$-approximate clique in $G(L\cup R)$. Note that this is possible, as $G(L\cup R)$ is co-bipartite, see~\Cref{lem:disks_included} below.
\end{enumerate}
\medskip
We finally return the largest clique among all the repeated $m_1m_2$ runs. 

For the  analysis, we start with the following observation.

\begin{lemma}\label{lem:cone}
For a disk $D(o,r_1)$,  let $\D'$ be the set of disks that intersect $D(o, r_1)$, Then, $|\D'|\leq 6k^*$.
\end{lemma}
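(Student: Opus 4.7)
The plan is to show that $\D'$ can be partitioned into at most six subsets, each of which is a clique in $G(\D)$. Since every clique has at most $k^*$ elements, this immediately gives $|\D'| \leq 6 k^*$.

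First, I would partition the plane around $o$ into six closed wedges $W_1, \ldots, W_6$ sharing apex $o$, each of angular extent $60^\circ$. For each wedge $W_j$, let $\D'_j \subseteq \D'$ be the disks whose center lies in $W_j$, breaking ties on wedge boundaries arbitrarily; a disk centered exactly at $o$ (if any) is assigned to any single wedge.

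The main step is to show that each $\D'_j$ is a clique. Fix two disks $D(p_1, \rho_1), D(p_2, \rho_2) \in \D'_j$, write $d_i = \dist(o, p_i)$, and assume without loss of generality that $d_1 \leq d_2$. Let $\theta = \angle p_1 o p_2$, so $\theta \leq 60^\circ$ and $\cos\theta \geq 1/2$. Since each disk intersects $D(o, r_1)$, we have $d_i \leq r_1 + \rho_i$ for $i=1,2$. The law of cosines then gives
\[
\dist(p_1, p_2)^2 \;=\; d_1^2 + d_2^2 - 2 d_1 d_2 \cos\theta \;\leq\; d_1^2 + d_2^2 - d_1 d_2 \;\leq\; d_2^2,
\]
where the first inequality uses $\cos\theta \geq 1/2$ and the second uses $d_1 \leq d_2$. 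Hence $\dist(p_1, p_2) \leq d_2 \leq r_1 + \rho_2 \leq \rho_1 + \rho_2$, so $D(p_1, \rho_1)$ and $D(p_2, \rho_2)$ intersect. The final step $r_1 \leq \rho_1$ is precisely where the assumption that $r_1$ is the smallest radius in the input is used.

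The main (modest) obstacle is choosing the right wedge count and recognising how the minimality of $r_1$ enters the argument: six wedges is tight, because the near-equal-radius case $\rho_1 = \rho_2 = r_1$ already forces $\cos\theta \geq 1/2$ via the same calculation. Beyond that, the proof is a short law-of-cosines computation and no further machinery is needed.
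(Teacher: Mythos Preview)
Your proof is correct and follows essentially the same approach as the paper: partition the plane into six $60^\circ$ cones at $o$, show that the disks of $\D'$ with centers in a single cone form a clique (using that the angle at $o$ is at most $\pi/3$ and that $r_1$ is minimal), and conclude $|\D'|\le 6k^*$. The only difference is cosmetic: the paper asserts $\dist(p_1,p_2)\le\max\{d_1,d_2\}$ directly from the angle bound, whereas you derive it explicitly via the law of cosines.
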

\begin{proof}
Consider the six cones of angle $\pi/3$ that partition $\R^2$ and have their apex at $o$. Any two disks $D(c_i,r_i)$ and $D(c_j,r_j)$ with the centers in the same cone and such that both
intersect $D(o,r_1)$ must also pairwise intersect. To see that, we
have $\dist(o,c_i)\leq r_1+r_i$ and $\dist(o,c_j)\leq r_1+r_j$. We also have that $\angle c_i
o c_j\leq \pi/3$. Thus, $\dist(c_i,c_j)\leq \max\{\dist(o,c_i), \dist(o,c_j)\}\leq
r_i+r_j$, since $r_1 \leq r_i, r_j$.
Hence, the disks from $\D'$ that are in a single cone form a clique. 
There must be a cone containing at least $|\D'|/6$ centers from $\D'$, and, consequently, we have $k^* \geq |\D'|/6$. 
\end{proof}

The probability that by random sampling one selects a specific disk $D(o,r_1)$ in $\D^*$ is
at least $\eps k/n$. Since we repeat the algorithm $m_1=\Theta(n/k\eps)$ times, with a constant probability $D(o,r_1)\in \D^*$.
In the following, we assume that $D(o,r_1)$
indeed appears in $\D^*$, and then use a similar argument as in the
unit disk case to argue that from each radius $r_i$, we discard at most 
an $\eps/2$-fraction of the disks from $\D^* \cap \D_i$ in our clique.

Let $c_1,\dots,c_h$ be the centers of the disks in $\D^* \cap D_i$,
ordered from left to right, and note that by our assumption on $\D^*$, we
have $h \geq \eps k$. By Lemma~\ref{lem:cone}, the probability that
$a_i \in \{c_1,\dots,c_{(\eps/4) h}\}$ is
\begin{align*}
  \frac{(\eps/4) h}{|D'_i|} \geq  \frac{(\eps/4) h}{|D'|} \geq \frac{\eps h}{24 k^*} \geq  \frac{\eps^2k}{24 k^*}.
\end{align*}

Similarly, the probability that $b_i$ is among the rightmost $(\eps/4) h$ centers of
$\D^* \cap \D_i$ is also at least $\eps^2k/24 k^*$. 

The following technical lemma, stated for our setting, asserts that we include at least $h-2(\eps/4)h=(1-\eps/2)h$
disks from $\D^* \cap \D_i$ in $L \cup R$.


\begin{lemma}
  \label{lem:disks_included}
Assume that $D(o, r_1)$ and $D(a_i, r_i)$, $D(b_i, r_i)$, for all $i$ are from $\D^*$. The disks in $L$ form a clique. Similarly, the disks in $R$ form a clique. Further, for every $i$, all disks in $\D^*$ of radius $r_i$ with centers within the vertical slab within $a_i$ and $b_i$ are included in $L\cup R$. 
\end{lemma}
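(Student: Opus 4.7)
The plan is to establish the three claims in order of increasing difficulty. For the inclusion statement, let $D = D(c, r_i) \in \D^* \cap \D_i$ have its center $c$ in the vertical slab of $a_i, b_i$. By hypothesis, $D(o, r_1)$ and every disk in $X$ belong to $\D^*$; since $\D^*$ is a clique, $D$ intersects all of them. Hence $D \in \D'$ and $D \in \D''$. As $D$ has radius $r_i$ and its center lies in the slab, the definitions of $L_i$ and $R_i$ place $D$ in $L_i$ (if $c$ is above the line through $a_i, b_i$) or in $R_i$ (if $c$ is below, with ties broken arbitrarily), so $D \in L \cup R$.

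For the clique claim about $L$, consider any two disks $D(c, r_i) \in L_i$ and $D(c', r_j) \in L_j$ with $r_i \leq r_j$, and aim to show $\dist(c, c') \leq r_i + r_j$. From the definitions of $L_i, L_j$ and of $\D''$, I collect the following constraints: $c$ lies in the slab of $a_i, b_i$ and weakly above the line through $a_i, b_i$; symmetrically $c'$ lies in the slab of $a_j, b_j$ and weakly above the line through $a_j, b_j$; $D(c,r_i)$ intersects both $D(a_j, r_j)$ and $D(b_j, r_j)$, and $D(c', r_j)$ intersects both $D(a_i, r_i)$ and $D(b_i, r_i)$; and the four anchor disks pairwise intersect because they all lie in $\D^*$. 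The plan is to invoke the classical upper half-lens argument of Clark, Colbourn and Johnson~\cite{ClarkCJ90}, extended to mixed radii by Keil and Mondal~\cite{keil_et_al:LIPIcs.SoCG.2025.63}, which is tailored to exactly such a configuration and concludes $\dist(c, c') \leq r_i + r_j$. The claim for $R$ then follows by a mirror argument: reflecting each reference segment swaps ``above'' and ``below'' while preserving all intersection relations and slab memberships, so the same geometric analysis applies.

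The main obstacle is the cross-radius step inside the clique argument. The two reference lines $a_i b_i$ and $a_j b_j$ need not be parallel, so ``weakly above'' refers to two different half-planes; the proof must track how these half-planes interact with the lens regions carved out by the four anchor-intersection constraints. An additional subtlety is that in Keil and Mondal's setting $a_i, b_i$ are the actual extremal disks of radius $r_i$ in $\D^*$, whereas here they are merely sampled members of $\D^*$. I would therefore carefully re-derive the half-lens argument to check that it uses only the pairwise intersections with the anchors and the ``same side'' condition, and never the extremality of $a_i, b_i$. Once this verification is done, the rest of the proof is routine.
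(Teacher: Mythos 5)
Your proof takes essentially the same route as the paper. For the inclusion claim, you and the paper argue identically: any disk $D(c,r_i)\in\D^*$ with center in the slab intersects $D(o,r_1)$ and all disks in $X$ (because $\D^*$ is a clique), so $D$ survives the filtering into $\D''$, and the slab/side conditions then place it in $L_i$ or $R_i$. For the clique claims, the paper simply cites Keil and Mondal's Lemma~3.1 and says no more; you propose to invoke the same half-lens argument.

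Where you go beyond the paper is in flagging a real subtlety. Keil and Mondal prove their lemma with $a_i,b_i$ being the genuinely leftmost and rightmost radius-$r_i$ disks of the optimal clique, whereas in this paper $a_i,b_i$ are merely sampled members of $\D^*$. You correctly observe that the clique argument for $L$ (and symmetrically $R$) should depend only on three things: each candidate disk intersects every anchor in $X$, its center lies in the relevant slab, and it is on the prescribed side of the segment $a_ib_i$ — never on extremality. You leave this verification as a to-do rather than carrying it out; the paper's one-line citation is equally terse and does not acknowledge that the hypotheses differ from those in the cited lemma. So your attempt is not more complete than the paper's proof, but it is more honest about what remains to be checked before the citation is airtight.
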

\begin{proof}
The proof that $L$ and $R$ form a clique  respectively follows directly from~\cite[Lemma 3.1]{keil_et_al:LIPIcs.SoCG.2025.63}. Next, all disks in $D^*$ intersect $D(o,r_1)$, and $D(a_i, r_i), D(b_i, r_i)$, for all $i$. Thus the disks of radius $r_i$ of $D^*$ with the conditions specified in the lemma
meet the requirements for $L_i$ or $R_i$ and thus are included in $L\cup R$.
\end{proof}


Hence, with probability  at lest $(\eps^2k/24 k^*)^2$ we include at least a
$(1-\eps/2)$ fraction of the disks of $\D^* \cap D_i$ in
$L \cup R$. The probability that we include $(1-\eps/2)$ fraction of
all disks (e.g. over all $t$ radii) is thus at least $(\eps^2k/24 k^*)^{2t}$.

We now compute a $(1-\eps/2)$-approximate clique in $G(L\cup R)$, so the
size is at least $(1-\eps/2)(1-\eps/2)k^* \geq (1-\eps)k^*$.

Repeating this
$m_2 = \Theta((k^*/k\eps^2)^{2t})$ times, the algorithm succeeds with a constant
probability. If  $k = \Theta(k^*)$, then $m_2 = O(1/\eps^{4t})$.

Now we can conclude with the full algorithm and its runtime.

\begin{lemma}
  \label{lem:multi_disk_fixed_radii}
  With a constant success probability, we can compute a $(1-\eps)$-approximate maximum clique of $n$ disks of $t$ distinct radii, in which every radius
  appears at least $\eps k$ times, in expected time
\[ O\left(n\log n+\frac{n}{k\eps}(t\log n+k^*)+\frac{nk^*}{k\eps}\left
          (\frac{k^*}{k\eps^2}\right )^{2t}
          \left(t + 
          \frac{\log^4
          k^*}{\eps}
          \right)
          \right)\]
  where $k^*$ is the size of the maximum clique.
\end{lemma}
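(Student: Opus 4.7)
The plan is to verify the claimed success probability and running time of the four-step procedure described above; correctness once a good sample set is drawn is already handled by Lemma~\ref{lem:disks_included} followed by an application of Theorem~\ref{lem:approximate_clique}. I would split the analysis into two parts: the probability that some sampled tuple $(o, \{(a_i,b_i)\}_{i=1}^t)$ across all $m_1 m_2$ trials is ``good'', and the total work.

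For the probability, I would chain two events. By the assumption $|\D^* \cap \D_1| \geq \eps k$, each outer sample satisfies $D(o, r_1) \in \D^*$ with probability at least $\eps k/n$, so $m_1 = \Theta(n/(k\eps))$ outer repetitions suffice to hit such an $o$ with constant probability. Conditioned on this, Lemma~\ref{lem:cone} gives $|\D'|\leq 6k^*$, and the argument preceding Lemma~\ref{lem:disks_included} shows that a single inner trial draws a ``good'' pair $(a_i,b_i)$ for every radius with probability at least $(\eps^2 k/(24 k^*))^{2t}$. Choosing $m_2 = \Theta((k^*/(k\eps^2))^{2t})$ then brings the conditional failure below a constant, so multiplying the two bounds yields a constant overall success probability.

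For the runtime I would split the cost into three pieces matching the three terms of the claimed bound. Preprocessing takes $O(n\log n)$ time: sort the disks by radius and, for each of the $t$ radii, build a data structure that, given a query point $o$, reports the centers of radius $r_i$ within distance $r_1+r_i$ of $o$ in output-sensitive time; this can be realized via the standard lift of disks to halfspaces in $\R^3$. Each outer iteration performs $t$ such queries which together return $\D'$ in $O(t\log n + |\D'|) = O(t\log n + k^*)$ time by Lemma~\ref{lem:cone}, and multiplying by $m_1$ yields the middle term. For each inner iteration, step~1 costs $O(t)$; steps~2 and~3 sweep $\D'$ once, testing each disk against the $2t$ disks of $X$ in $O(tk^*)$ time; and step~4 invokes Theorem~\ref{lem:approximate_clique} on a co-bipartite graph with $|L \cup R| = O(k^*)$ vertices in expected $O((k^*/\eps)\log^4 k^*)$ time. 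Summing $k^*(t + \log^4 k^*/\eps)$ over all $m_1 m_2$ inner iterations yields the third term.

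The main obstacle I anticipate is making the range-reporting step tight enough to obtain $O(t\log n + k^*)$ per outer iteration: plain 2D circular range reporting does not immediately yield $O(\log n + k)$ query time. I would resolve this through the paraboloid lift to 3D halfspace reporting, which is clean for each fixed radius $r_i$ because the query disks all share the radius $r_1+r_i$. Once this is arranged, the three contributions sum to exactly the bound stated in the lemma, and combining with the probability analysis completes the proof.
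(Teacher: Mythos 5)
Your proposal is correct and follows essentially the same approach as the paper's proof. The probability analysis (outer sampling with $m_1 = \Theta(n/(k\eps))$ to hit a disk of $\D^* \cap \D_1$, then inner sampling with $m_2 = \Theta((k^*/(k\eps^2))^{2t})$ using the $(\eps^2 k/(24k^*))^{2t}$ per-trial bound from the argument preceding Lemma~\ref{lem:disks_included}), the decomposition of the running time into preprocessing, range queries, and per-inner-iteration work, and the use of output-sensitive circular range reporting via the lift to 3D halfspace reporting are all exactly what the paper does (the paper realizes the latter via Afshani's $O(\log n + \ell)$-query, $O(n\log n)$-preprocessing halfspace range reporting structure, which is the same idea you propose).
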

\begin{proof}
    For each disk $D(o, r_1)$, finding the disks that intersect it can be done easily in $O(n)$ time. But if we repeat this step $m_1=\Theta(n/k\eps)$ times, this will be too costly. Instead, we pre-process the disks into a data structure and answer a 2D circular range query, which returns the points within a query disk. In particular, for each radius $r_i$ we process the centers of the disks in $\D_i$ into a size $O(|\D_i|)$ data structure using the construction in~\cite{Afshani2009-ih}
    such that one can report all centers of the disks in $\D_i$ within distance $r_1+r_i$ from $o$ in $O(\log n+\ell_i)$ time, where $\ell_i$ is the size of the output. The construction time for the data structure for $\D_i$ is $O(|\D_i|\log|\D_i| )$. Thus total construction time for all $t$ different radii is $O(n\log n)$. We answer $tm_1$ queries and the total query time is $O(tm_1 \log n+m_1k^*)$. 
    
    Note that Steps~1--4 of our algorithm operate on $\D'$ for each choice of $D(o,r_1)$ with $|\D'|=O(k^*)$ disks. 
    Thus, the running time for Steps~1--4 is
    $O(2tk^*+f_{\eps/2}(k^*))$, where $f_\eps(n)$ is the time
    for computing a $(1-\eps)$-approximate clique in an $n$ by $n$ co-bipartite
    disk graph. Applying Theorem~\ref{lem:approximate_clique}, we have
    $f_\eps(n)=O((n/\eps)\log^4 n)$ expected time, so we have one iteration of Steps~1--4 to be $O(tk^*+(k^*/\eps) \log^4 k^*)$.
    Summing all the computation costs, we have,
    \[ O\left(n\log n+\frac{n}{k\eps}(t\log n+k^*)+\frac{nk^*}{k\eps}\left
          (\frac{k^*}{k\eps^2}\right )^{2t}
          \left(t + 
          \frac{\log^4
          k^*}{\eps}
          \right)
          \right).\qedhere
    \]
\end{proof}

\subsection{The complete algorithm}
\label{sub:The_complete_algorithm}




\begin{theorem}
\label{thm:approx_clique_in_disk_graph}
  Let \D be a set of $n$ disks, let $t$ be the number of distinct
  radii in \D, and let $\eps \in (0,1)$ be a parameter.
  With a constant success probability, we can compute a
  $(1-\eps)$-approximate maximum clique in $G(\D)$ in expected time $\tilde{O}\left((\frac{t}{\eps})^{O(t)}n\right)$.   
\end{theorem}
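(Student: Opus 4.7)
The plan is to reduce to the special case solved by Lemma~\ref{lem:multi_disk_fixed_radii} by combining a constant-factor estimate of $k^*$ with a guess of which radii contribute significantly to the optimum. First, I would compute in $O(n)$ time a $2$-approximation $\hat{k}$ of $k^*$ using the stabbing-based procedure referenced in Section~\ref{sec:Introduction}~\cite{Carmi2023-av,AmbuhlW05}, so that $\hat{k}\leq k^* \leq 2\hat{k}$. Fix any optimal clique $\D^*$ and call a radius $r_i$ \emph{heavy} if $|\D^*\cap\D_i|\geq \eps\hat{k}/t$; let $T^*\subseteq\{1,\dots,t\}$ be the set of heavy indices. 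Since discarding all disks of non-heavy radii from $\D^*$ removes at most $t\cdot(\eps\hat{k}/t) = \eps\hat{k}\leq \eps k^*$ disks, the restriction $\D^*_{T^*}:=\D^*\cap\bigcup_{i\in T^*}\D_i$ is still a clique of size at least $(1-\eps)k^*$, and by construction contains at least $\eps\hat{k}/t$ disks of each heavy radius.

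Since $T^*$ is unknown, I would enumerate all $2^t$ subsets $T\subseteq\{1,\dots,t\}$ and for each one invoke Lemma~\ref{lem:multi_disk_fixed_radii} on $\bigcup_{i\in T}\D_i$ with error parameter $\eps/2$ and count parameter $k:=\hat{k}/t$; the per-radius precondition of that lemma is then met when $T=T^*$, witnessed by $\D^*_{T^*}$. Running each call with $O(t)$ independent repetitions boosts its success probability high enough that a union bound over the $2^t$ subsets still yields constant overall success. The algorithm returns the largest clique produced; the iteration with $T=T^*$ then outputs, with constant probability, a clique of size at least $(1-\eps/2)|\D^*_{T^*}|\geq(1-\eps)k^*$, giving correctness.

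For the running time, substituting $k=\hat{k}/t$ and $k^*\leq 2\hat{k}$ into the bound of Lemma~\ref{lem:multi_disk_fixed_radii} gives $k^*/k=O(t)$ and $k^*/(k\eps^2)=O(t/\eps^2)$, so the dominating term $\frac{nk^*}{k\eps}\bigl(\frac{k^*}{k\eps^2}\bigr)^{2t}(t+\log^4 k^*/\eps)$ collapses to $\tilde O\bigl(n\,(t/\eps)^{O(t)}\bigr)$ per subset; the $2^t$ factor from subset enumeration and the $O(t)$ factor from boosting are absorbed into the same bound. The main obstacle is calibrating the heaviness threshold jointly with the count parameter $k$: the choice $k=\hat{k}/t$ is what simultaneously ensures that $\D^*_{T^*}$ satisfies the precondition of Lemma~\ref{lem:multi_disk_fixed_radii} when $T=T^*$ and that the factor $(k^*/k)^{O(t)}$ degenerates to $(t/\eps)^{O(t)}$, independent of $n$.
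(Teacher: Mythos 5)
Your plan follows the paper's own approach closely: compute a $2$-approximation of $k^*$, define a heaviness threshold so that discarding disks of rare radii from an optimal clique costs only an $O(\eps)$-fraction, enumerate all $2^t$ subsets of radii, and invoke Lemma~\ref{lem:multi_disk_fixed_radii} with a rescaled error and count parameter. The running-time accounting, with $k^*/k = O(t)$ so that the exponential factor collapses to $(t/\eps)^{O(t)}$, is also the same calculation as in the paper.

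There is one quantitative miscalibration in your correctness chain. You set the heaviness threshold at $\eps\hat{k}/t$, so discarding the light radii loses at most $\eps\hat{k}\le\eps k^*$ disks, giving $|\D^*_{T^*}|\ge(1-\eps)k^*$. Running Lemma~\ref{lem:multi_disk_fixed_radii} with error $\eps/2$ then yields a clique of size at least $(1-\eps/2)|\D^*_{T^*}|\ge(1-\eps/2)(1-\eps)k^*$, which is \emph{strictly less} than $(1-\eps)k^*$ for every $\eps\in(0,1)$ (it equals $(1-3\eps/2+\eps^2/2)k^*$). Your claim ``$(1-\eps/2)|\D^*_{T^*}|\geq(1-\eps)k^*$'' does not follow. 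The fix is the one the paper uses: take the threshold to be $\eps k^*/(2t)$ (or $\eps\hat{k}/(2t)$ in your normalization), so that the heavy-radius restriction of $\D^*$ has size at least $(1-\eps/2)k^*$; combined with a $(1-\eps/2)$-approximation this gives $(1-\eps/2)^2k^*\ge(1-\eps)k^*$. This threshold is still at least $(\eps/2)(\hat k/t)$, so the per-radius precondition of Lemma~\ref{lem:multi_disk_fixed_radii} with your chosen parameters $(\eps/2,\ \hat k/t)$ remains satisfied. Separately, your $O(t)$ boosting and union bound over all $2^t$ subsets is unnecessary: every subset produces a valid clique and you return the maximum, so you only need the single run with $T=T^*$ to succeed with constant probability, exactly as Lemma~\ref{lem:multi_disk_fixed_radii} already guarantees. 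Including the boosting is not wrong and does not change the asymptotic bound, but it is extra machinery the argument does not require.
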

\begin{proof}
First, we compute a $1/5$-approximate maximum clique $\hat{D}$. This can be achieved in $O(n\log n)$ time as follows. It is well-known that every set of pairwise intersecting disks can be stabbed by four points~\cite{Danzer86, Stacho84}. This implies that there exists a point that is covered by at least $1/4$ of the disks in a maximum clique\footnote{Carmi, Katz and Morin~\cite{Carmi2023-av} showed that given a set of pairwise intersecting disks one can compute a set of four stabbing points in linear time but it's not obvious how to find a point as above in the same time.}. Thus, a point of maximum depth is covered by at least so many disks as well. Computing a point of maximum depth in an arrangement of disks is $3$SUM-hard~\cite{AronovH08}; however, a point of $(1-\eps)$-approximate maximum depth can be computed, with high probability, in $O((1/\eps)^2 n\log n)$ expected time~\cite{AronovH08}. We therefore run this algorithm for $\eps=1/5$ and get the approximate clique of the desired size. Thus, we take $k=|\hat{D}|$, where $k^*/5 \leq k \leq k^*$.

Consider the optimal solution $D^*$ with $k^*$ disks and assume that all disks of radius $r_i$ are removed from $D^*$ if there are at most $\eps k^*/(2t)$ of them in it. This leaves a clique $\bar D\subseteq D^*$ with $(1-\eps/2)k^* \leq |\bar D|\leq k^*$. Note that $\bar D$ has at least $\eps k^*/2t\geq \eps k/2t$ of disks in radius $r_i$, if $r_i$ appears at all. We now aim to find a $1-\eps/2$ approximation to $\bar D$, which is a $1-\eps$ approximation to $D^*$.  Of course, we do not know in advance what those radii are that appear less than $\eps k^*/(2t)$ in $D^*$, which is some (fixed but unknown) optimal solution, and therefore we will brute-force through all subsets of radii, at the cost of an extra factor of $2^t$ in the running time.

Let $\RR$ denote the ordered set of distinct radii appearing in
\D. For each subset $R$ of radii, we run the algorithm from
Lemma~\ref{lem:multi_disk_fixed_radii} with parameters $\eps/2$ and
$k/t$. 
We output the best solution.
The correctness and the approximation factor follow immediately from Lemma~\ref{lem:multi_disk_fixed_radii} and the fact that we enumerate all radius subsets, one of which matches the set of radii that appear more than $\eps k^*/(2t)$ times in the optimal solution $D^*$. For that particular run, Lemma~\ref{lem:multi_disk_fixed_radii} guarantees that we find a $1-\eps$ approximation to $D^*$ with a constant probability.
    
    The running time comes from repeating Lemma~\ref{lem:multi_disk_fixed_radii} with parameters $\eps/2$ and $k/t$, for $2^t$ times. Notice that the construction of the circular range query data structure is only done once at the initialization step. But all other steps will be repeated $2^t$ times. 
    Further, $k^*/5 \leq k \leq k^*$. 
    The total running time is obtained from applying Lemma~\ref{lem:multi_disk_fixed_radii}.
$$O\left (n\log n+\frac{2^tt^2n\log n}{k^*\eps}+\frac{t2^tn}{\eps}+ \frac{2^{7t}t^{2t+1}n\log^4 k^*}{\eps^{4t+2}}
+ \frac{nt2^{4t+1}}{\eps^{4t+1}}
\right)=\tilde{O}\left(\left(\frac{t}{\eps}\right)^{O(t)}n\right).$$ 
\end{proof}

As in the case of unit disks, we can increase the probability of success to any desired value by independently repeating the algorithm sufficiently many times. 



\section*{Acknowledgement}
The work in this paper was initiated at the Lorentz Center workshop on Fine-Grained \& Parameterized Computational Geometry, held in February 2025. Gao acknowledges support from NSF under grants CNS-2515159, IIS-2229876, DMS-2220271, DMS-2311064, and CCF-2118953. 
Gawrychowski acknowledges support from the Polish National Science Centre under grant 2023/51/B/ST6/01505.
Singh acknowledges support from the Research Council of Finland under grant 363444.
Zehavi acknowledges support from the Israel Science Foundation under grant 1470/24, and from the
European Research Council under grant 101039913 (PARAPATH).

\bibliography{clique}


\end{document}